\newtheorem{theorem}{Theorem}
\newtheorem{corollary}{Corollary}
\newtheorem{lemma}{Lemma}
\newcommand{\ignore}[1]{}
\newcommand{\F}{\mathbb{F}}
\newcommand{\bin}{\{0, 1\}}
\newcommand{\ldis}{{d}_\mathsf{L}}
\newcommand{\lball}{B_\mathsf{L}}
\newcommand{\D}{\mathsf{D}}
\newcommand{\I}{\mathsf{I}}
\newcommand{\out}{\mathrm{out}}
\newcommand{\inin}{\mathrm{in}}
\newcommand{\conc}{\mathrm{conc}}
\begin{document}
%
% paper title
% Titles are generally capitalized except for words such as a, an, and, as,
% at, but, by, for, in, nor, of, on, or, the, to and up, which are usually
% not capitalized unless they are the first or last word of the title.
% Linebreaks \\ can be used within to get better formatting as desired.
% Do not put math or special symbols in the title.
\title{On the List Decodability of Insertions and Deletions
\thanks{Part of this work appeared in the Proceedings of the 2018 IEEE International Symposium on Information Theory.}}
%
%
% author names and IEEE memberships
% note positions of commas and nonbreaking spaces ( ~ ) LaTeX will not break
% a structure at a ~ so this keeps an author's name from being broken across
% two lines.
% use \thanks{} to gain access to the first footnote area
% a separate \thanks must be used for each paragraph as LaTeX2e's \thanks
% was not built to handle multiple paragraphs
%

\author{
  Tomohiro Hayashi\thanks{Kanazawa University, Japan. E-mail: \texttt{t-hayashi@stu.kanazawa-u.ac.jp}}
    \and 
    Kenji Yasunaga\thanks{Osaka University, Japan. E-mail: \texttt{yasunaga@ist.osaka-u.ac.jp}}
}
\maketitle

% As a general rule, do not put math, special symbols or citations
% in the abstract or keywords.
\begin{abstract}
  In this work, we study the problem of list decoding of insertions and deletions.
  We present a Johnson-type upper bound on the maximum list size.
  The bound is meaningful only when insertions occur.
  Our bound implies that there are binary codes of rate $\Omega(1)$ that are list-decodable from a $0.707$-fraction of insertions.
  For any $\tau_\mathsf{I} \geq 0$ and $\tau_\mathsf{D} \in [0,1)$, there exist $q$-ary codes of rate $\Omega(1)$ that are list-decodable from
    a $\tau_\mathsf{I}$-fraction of insertions and $\tau_\mathsf{D}$-fraction of deletions, where $q$ depends only on $\tau_\mathsf{I}$ and $\tau_\mathsf{D}$.
    We also provide efficient encoding and decoding algorithms for list-decoding
    from $\tau_\mathsf{I}$-fraction of insertions and $\tau_\mathsf{D}$-fraction of deletions
    for any $\tau_\mathsf{I} \geq 0$ and $\tau_\mathsf{D} \in [0,1)$.
    Based on the Johnson-type bound, we derive a Plotkin-type upper bound on the code size in the Levenshtein metric.
  \ignore{
  List decoding of insertions and deletions is studied.
  A Johnson-type upper bound on the maximum list size is derived.
  The bound is meaningful only when insertions occur.
  The result implies that
  there are binary codes  list-decodable from a $0.707$-fraction of insertions.
  For non-binary code, for any constant $\tau_\I \geq 0$ and $\tau_\D \in [0,1)$, there exist codes
    that are list-decodable from a $\tau_\I$-fraction of insertions and a $\tau_\D$-fraction of deletions.
  Efficient encoding and decoding algorithms are also presented for codes with list-decoding radius approaching the above bounds.
  Based on the Johnson-type bound, a Plotkin-type upper bound on the code size in the Levenshtein metric is derived.
  }
\end{abstract}

% Note that keywords are not normally used for peerreview papers.
%\begin{IEEEkeywords}
%  List decoding, insertion and deletion, Levenshtein metric, Johnson bound, Plotkin bound.
%\end{IEEEkeywords}

% For peer review papers, you can put extra information on the cover
% page as needed:
% \ifCLASSOPTIONpeerreview
% \begin{center} \bfseries EDICS Category: 3-BBND \end{center}
% \fi
%
% For peerreview papers, this IEEEtran command inserts a page break and
% creates the second title. It will be ignored for other modes.
%\IEEEpeerreviewmaketitle

\section{Introduction}

This work addresses the problem of constructing codes for correcting insertions and deletions.
The problem was first studied in the 1960s by Levenshtein~\cite{Lev66}.
Such codes are useful for correcting errors in DNA storage and DNA sequencing~\cite{JHSB17}.

Levenshtein~\cite{Lev66} showed that the code proposed by Varshamov and Tenengolts~\cite{VT65} could be
used to correct a single insertion or deletion.
Multiple insertions and deletions were considered in~\cite{HF02,AGPFC12,PAFC12}.
Codes for correcting a constant fraction of insertions and deletions have been studied
in~\cite{SZ99,GW17,BGH17,GL16}.
%Schulman and Zuckerman~\cite{SZ99}, Guruswami and Wang~\cite{GW17},
%Bukh, Guruswami, and H{\aa}stad~\cite{BGH17}, and Guruswami and Li~\cite{GL16}.
Bounds on the size of codes correcting insertions and deletions have been investigated in~\cite{Lev02,KK13,CK14}.

In this work, we consider \emph{list decoding}, in which
the decoder outputs the list of all codewords that lie within some radius of the received word.
Regarding list decoding of insertions and deletions,
Guruswami and Wang~\cite{GW17} construct a binary code that can be list-decoded from
a $(1/2 - \varepsilon)$-fraction of deletions for any $\varepsilon > 0$.
Recently, Wachter-Zeh~\cite{WZ18} claimed a Johnson-type upper bound on the maximum list size
in the presence of insertions and deletions.
However, there is a flaw in an intermediate step in~\cite{WZ18}. We discuss it in Appendix~\ref{sec:wz17}.
%we found a flaw in the proof of~\cite{WZ18}.
We do not know if the flaw can be fixed.
%there are flaws in the arguments of~\cite{WZ18}.
%Thus, the bound presented in~\cite{WZ18} may not hold in general.
%\footnote{%We found a flaw in an intermediate step of the proof.
  %We found no counterexample to the claimed bound of~\cite{WZ18}.
%  The flaw lies in an intermediate step of the proof. The bound of~\cite{WZ18} could be seen as a bound for a restricted class of codes.
%  See Appendix~\ref{sec:wz17} for details.}
%Since the problem lies in an intermediate step in the proof and we found no counterexample to the claimed bound,
%the bound of~\cite{WZ18} itself may hold.

In this paper, we give a corrected Johnson-type bound for insertions and deletions.
Our bound implies that for any $\tau_\I \geq 0$ and $\tau_\D \in [0,1)$, there exist $q$-ary codes list-decodable from
  $\tau_\I$-fraction of insertions and $\tau_\D$-fraction of deletions with a polynomial-sized list, where $q$ depends only on $\tau_\I$ and $\tau_\D$.
  More specifically, when $\tau_\D=0$, 
  polynomial-sized list decoding is possible as long as $\tau_\I < \delta/(1-\delta)$,
  where $\delta$ is the normalized minimum Levenshtein distance of the code.
  The result implies that for any constant $q \geq 2$, the $q$-ary code presented by Bukuh et al.~\cite{BGH17}
  can list decode from $(q+\sqrt{q}-2)/2$-fraction of insertions.
  When $q=2$, the code can correct $0.707$-fraction of insertions.
  Our bound is incomparable with the claimed bound of~\cite{WZ18}. See Section~\ref{sec:comparison} for a detailed comparison.
  We also present efficient encoding and decoding algorithms for list decoding for given radii $\tau_\I \geq 0$ and $\tau_\D \in [0,1)$.
  Finally, based on the Johnson-type bound in the Levenshtein metric,
  we derive a \emph{conditional} Plotkin-type upper bound on the code size.

  %Wachter-Zeh~\cite{WZ18} claimed Johnson-type bounds for the cases that
  %1) the upper bounds on the fractions of insertions and deletion, namely $\tau_\I$ and $\tau_\D$, are the same,
  %2) $\tau_\I = 0$, and 3) $\tau_\D=0$.
  %Our bound is weaker for case 1) and gives a trivial bound for case 2).
  %For case 3), while the bound of~\cite{WZ18} has an upper bound on the radius $\tau_\I$,
  %our bound implies that $\tau_\I$ can take any non-negative value.
  
%  The claimed bound of~\cite{WZ18} implicitly assumes that the upper bounds on insertions and deletions are the same.
  %Although we could not fix the flaw of the proof,
%  By restricting to the case that $\tau_\I = \tau_\D$ for comparison,
%  our bound is weaker than the bound of~\cite{WZ18}.

%  Our bound is incomarable with the claimed bound of~\cite{WZ18}.
%  When $\tau_\D = 0$, our bound can take any $\tau_\I \geq 0$ by choosing sufficiently large $\delta$, although $\tau_\I$ has an upper bound in~\cite{WZ18}.
%  When $\tau_\I = 0$, while the bound of~\cite{WZ18} gives a non-trivial bound,
%  our bound only yields a trivial result that the unique decoding is possible if the number of deletions is less than half the minimum Levenshtein distance.
 % Thus, a remaining open problem is to give a general bound on the list size when only deletions occur.

The notion of list decoding was introduced by Elias~\cite{Eli65} and Wozencraft~\cite{Woz58}
in the Hamming metric.
A code is called \emph{$(t, \ell)$-list decodable} if, for any given word $v$,
the number of codewords within Hamming distance $t$ from $v$ is at most $\ell$.
List decoding is meaningful if a large \emph{list-decoding radius} $t$ is attainable
for a small \emph{list size} $\ell$.
Since unique decoding is possible for $t < d/2$, where $d$ is the minimum Hamming distance of the code,
the list-decoding radius $t$ should be at least $d/2$.
For list size $\ell$, we usually require $\ell$ should be a polynomial in the code length.
This is because if $\ell$ is superpolynomial,
it is impossible to guarantee a polynomial-time list decoding.
As an upper bound on the list size in the Hamming metric,
the Johnson bound~\cite{Joh62} guarantees polynomial-size list decoding.
Specifically, it implies that the list size is polynomial in $n$
as long as the list decoding radius is less than $n - \sqrt{n(n-d)}$,
where $n$ is the code length.

%They provided an optimal-rate list-decodable codes
Recently, Haeupler, Shahrasbi, and Sudan~\cite{HSS18} presented the construction of list-decodable codes for insertions and deletions
by using an object called \emph{synchronization strings}~\cite{HS17a,HS18,CHLSW19}.
They showed that for every $\tau_\I \geq 0, \tau_\D \in [0,1), \varepsilon > 0$, there exist $q$-ary codes of rate $1-\tau_\D -\varepsilon$
  that are list-decodable from $\tau_\I$-fraction of insertions and $\tau_\D$-fraction of deletions,
  where $q$ depends only on $\tau_\I, \tau_\D, \varepsilon$.
  While their code construction achieves optimal rates, the alphabet size needs to be large constant,
  which is inherent in using synchronization strings.
  In this work, we could obtain binary list-decodable codes.

%  Finally, we present some open questions.
  
%Note that the work of~\cite{HSS18} also gives optimal-rate codes for list-decoding against $\tau_\I n$ insertions and $\tau_\D n$ deletions
%for any $\tau_\I \geq 0$ and $\tau_\D \in [0,1)$.
%  While the construction of~\cite{HSS18} provides stronger list-decodability and rate optimality,
%the alphabet size of the code needs to be large constant, which is inherent in using synchronization strings.
%In our construction, we could obtain a binary list-decodable code against $0.707$-fraction of insertions.

  \ignore{
\subsection*{Our Results}

We derive a Johnson-type bound for insertions and deletions.
Let $t_\I$ be an upper bound on the number of insertions and $t_\D$ be an upper bound on the number of deletions
occurring in the transmitted codeword. %(resp. $t_\D$) be an upper bound of the number of insertions (resp. deletions) that occur in the transmitted codeword.
We show that the list size is at most $(d/2)(n+t_\I)$
as long as %the following relations hold:
$t_\D < d/2$ and $t_\I < (d/2-t_\D)(n-t_\D)/(n-d/2)$, %$t_\I^2 - N(t_\I - (d/2 - t_\D)) > 0$,
%the list decoding radius $t_\mathsf{L} = t_\I + t_\D$ is less than $d/2 + t_\I^2/N$,
where $n$ is the code length and %$N$ is the length of the received word, and
$d$ is the minimum Levenshtein distance of the code,
which is the minimum number of insertions and deletions need to transform a codeword to another one.
Let $t_\D = \rho(d/2)$ for $\rho \in [0,1)$ and $\delta = d/(2n) \in [0,1]$.
  Then, the result implies that
  as long as the normalized list-decoding radius $(t_\I + t_\D)/n$ is less than
  $\tau_\mathsf{ID}(\delta,\rho)= \delta +(1-\rho^2)\delta^2/(1-\delta)$, the list size is polynomial in $n$.
  When only insertions occur, i.e., $\rho = 0$, polynomial list size can be achieved as long as $t_\I/n$ is less than $\delta/(1-\delta)$.
%  for $t_\I/n < 1$ if $d = n$, and $t_\I/n < 3$ if $d = (3/2)n$.
%The result implies that, when only insertions occur,
%the list size is polynomial in $n$ as long as the normalized list-decoding radius $t_\I/n$ is less than
%$\delta/(1-\delta)$, where $\delta = d/(2n)$.
%Hence, if $d = n$, the decoding radius is $t_\I = n$.
%Also, if $d= (3/2)n$, the list-decoding radius is $3n$, which is four times as large as the unique-decoding radius.
For binary codes, Bukh et al.~\cite{BGH17} presented the construction of codes with normalized minimum distance $\delta$ approaching $0.414$,
which implies that the code is potentially list-decodable from a $0.707$-fraction of insertions in polynomial time.
Although the above result shows that $t_\I/n$ can take any positive value for sufficiently large $\delta < 1$,
our bound also implies that the same holds even when $t_\D/n$ takes any value less than $1$.
Specifically, we show that for any given $\tau_\I \geq 0$ and $\tau_\D \in [0,1)$, a code of normalized minimum distance $\delta = 1 - \varepsilon$
  is list-decodable from a $\tau_\I$-fraction of insertions and a $\tau_\D$-fraction of deletions if $\varepsilon >0$ is sufficiently small.
%More interestingly, if the normalized minimum distance is $\delta = 1 - \varepsilon$, then the radius is $(1-\varepsilon)/\varepsilon$,
%which can take any positive value for sufficiently small $\varepsilon$.
It is known that the normalized minimum distance $1 - \varepsilon$ for any $\varepsilon > 0$ is achievable by codes over an alphabet of size $\Omega(\varepsilon^3)$~\cite{GW17}.
Thus, for any constant $\tau_\I  \geq 0$ and $\tau_\D \in [0,1)$,
  we can achieve the list-decoding radius $t_\I = \tau_\I n$ and $t_\D = \tau_\D n$ by codes with constant alphabet size.

We also present efficient encoding and decoding algorithms for list decoding of radius approaching the above bounds. % $\tau_\mathsf{L}(\delta,\rho)$.
The construction is based on concatenated codes with outer Reed-Solomon codes,
which are also used in the previous work~\cite{GL16,GW17} for explicit codes for \emph{unique} decoding of insertions and deletions.
%explicit codes correcting insertions and deletions.
Specifically, we show that for any $\tau_\I \geq 0$ and $\tau_\D \in [0,1)$,
  there is a code of rate $\Omega(1)$ and alphabet size $O(1)$ that is list-decodable from $\tau_\I n$ insertions and $\tau_\D n$ deletions in polynomial time.
  When only insertions occur,
  based on the code of~\cite{BGH17}, for any $q \geq 2$ and $\gamma > 0$,
we present a code of rate $\Omega(1)$ and alphabet size $q$ that is list-decodable from $((q+\sqrt{q})/2-1-\gamma)n$ insertions in polynomial time.

Recently, Haeupler, Shahrasbi, and Sudan~\cite{HSS18} presented the construction of list-decodable codes for insertions and deletion
by using an object called \emph{synchronization strings}~\cite{HS17a,HS18,CHLSW19}.
Note that the work of~\cite{HSS18} also gives optimal-rate codes for list-decoding against $\tau_\I n$ insertions and $\tau_\D n$ deletions
for any $\tau_\I \geq 0$ and $\tau_\D \in [0,1)$.
  While the construction of~\cite{HSS18} provides stronger list-decodability and rate optimality,
the alphabet size of the code needs to be large constant, which is inherent in using synchronization strings.
In our construction, we could obtain a binary list-decodable code against $0.707$-fraction of insertions.
}

In Section~\ref{sec:pre}, we introduce the necessary notions and notations.
We prove a Johnson-type bound in Section~\ref{sec:bound}.
In Section~\ref{sec:discussion}, we provide implications from our bound and compare them with the claimed bound
of~\cite{WZ18}.
In Section~\ref{sec:algorithms},
we show efficient encoding and decoding algorithms for codes obtained in the previous sections.
Finally, we present a  Plotkin-type bound in Section~\ref{sec:plotkin}

%the listf decoding radius $t_\I$ for which a polynomial-sized list is guaranteed for insertions approaches $n$ as $d$ approaches $n$.
%More precisely, when only insertions occur,
%the list size is polynomial in $n$ as long as the normalized list-decoding radius $t_\I/n$
%is less than $\delta/(2-\delta)$, which approaches $1$ as $\delta \to 1$, where $\delta = d/n$.
%Also, our list decoding radius $t_\I + t_\D$ degrades as $t_\D$ grows.

\ignore{
For binary codes, every codeword in $\bin^n$ can be transformed into
a word of length $2n$ with run length $1$, namely, $0101\cdots$ or $1010\cdots$ by $n$ insertions.
Thus, the list decoding radius with polynomial-sized list is bounded above by $n$.
In this sense, our bound is tight for binary codes in the case that only insertions occur.
For $q$-ary codes, a trivial upper bound on list decoding radius with polynomial-sized list is $(q-1)n$,
since every codeword in $\{1,2,\dots,q\}^n$ can
be transformed into the word $12\cdots q12 \cdots q \cdots 12\cdots q$ of length $qn$ by $(q-1)n$ insertions.
It is an interesting open question whether or not the list decoding radius with polynomial-sized list
can approach $(q-1)n$ for $q$-ary codes.
}

\ignore{
Based on the Johnson-type bound in the Levenshtein metric,
we derive a \emph{conditional} Plotkin-type upper bound on the code size.
Specifically, we show that for a $q$-ary code $C$ of length $n$ and minimum Levenshtein distance $d$,
if there exists a word $v$ of length $N$ such that $v$ contains every codeword in $C$ as a subsequence
and $\frac{d}{2n} \geq 1 - \frac{n}{N}$, then $|C| \leq Nd/(Nd - 2(N-n)n)$.
%has the minimum Levenshtein distance $d$
%with $\frac{d}{2n} > 1 - \frac{1}{q}$, then
%$|C| \leq qd/(qd - 2(q-1)n)$.
}

\section{Preliminaries}\label{sec:pre}

Let $\Sigma$ be a finite alphabet of size $q$.
The Levenshtein distance $\ldis(x, y)$ between two words $x$ and $y$ is
the minimum number of insertions and deletions needed to transform $x$ into $y$.
For a given word $v$, we denote by $\lball(v, t_\I, t_\D)$ the set of words that can be obtained
from $v$ by at most $t_\I$ insertions and at most $t_\D$ deletions.
A code $C \subseteq \Sigma^n$ is said to be \emph{$(t_\I, t_\D, \ell)$-list decodable}
if $|\lball(v, t_\D, t_\I) \cap C | \leq \ell$ for every word $v$.
Note that we consider the set $\lball(v, t_\D, t_\I)$, not $\lball(v, t_\I, t_\D)$.
This is because $x$ is a received word when $t_\I$ insertions and $t_\D$ deletions occurred,
and thus we need to consider codewords that can be obtained from $x$ by $t_\D$ insertions and $t_\I$ deletions.
The minimum Levenshtein distance of a code $C \subseteq \Sigma^n$ is
 the minimum distance $\ldis(c_1, c_2)$ of every pair of distinct codewords $c_1, c_2 \in C$.
Note that the Levenshtein distance between two words in $\Sigma^n$ takes integer values from $0$ to $2n$.
Thus, we normalize the Levenshtein distance by $2n$.
Namely, the \emph{normalized} minimum Levenshtein distance of $C$ is $d/(2n)$, where $d$ is the minimum Levenshtein distance of $C$.
For positive integer $i$, we denote by $[i]$ the set of integers $\{1, \dots, i\}$.

\section{Upper Bound on List Size}\label{sec:bound}

To derive the bound,
we follow the approach of Wachter-Zeh~\cite{WZ18},
which is based on the proofs of Bassalygo~\cite{Bas65} and Roth~\cite[Prop.~4.11]{Roth06}
for the Johnson bound in the Hamming metric.

\begin{lemma}\label{lem:bound}
  Let $C \subseteq \Sigma^n$ be a code of minimum Levenshtein distance $d$.
  For non-negative integers $t_\I$ and $t_\D \leq n$, positive integer $N = n + t_\I - t_\D$, % \in [n - t_\D, n + t_\I]$,
  and any word $v \in \Sigma^N$,
  let $\ell = |\lball(v, t_\D, t_\I) \cap C|$ be the list size when $v$ is received.
  %  Let $t_\I'$ and $t_\D'$ be the maximum integers satisfying $t_\I' - t_\D' = N - n$, $t_\I' \leq t_\I$, and $t_\D' \leq t_\D$.
  If
  \begin{equation}
    %\frac{d}{2} > t_\D + \frac{t_\I n}{N}, \label{eq:listcond}
    \frac{d}{2} > t_\D + \frac{t_\I(n-t_\D)}{N}, \label{eq:listcond}
  \end{equation}
  then 
  \begin{equation}
    \ell \leq \frac{N(d/2 - t_\D)}{N(d/2-t_\D)-t_\I(n-t_\D)}.\label{eq:listupbound}
    %\ell \leq \frac{N(d/2-t_\D)}{N(d/2-t_\D) - t_\I n}.
    %\frac{d-2t_\D}{d-2t_\D-\sqrt{t_\I n}}.
    %\left(d - 2t_\D - 2t_\I + \frac{2}{N}t_\I^2 \right) \ell \leq d - 2t_\D.
  \end{equation}
%  In particular, if
  %$\ell \triangleq \max_{v \in \Sigma^N}|\lball(v, t_\D, t_\I) \cap C|$ be the maximum list size for received words of length $N$.
  %   if  $t_\D < d/2$ and 
%  \begin{equation}
%    t_\I^2 - N(t_\I - (d/2 - t_\D)) > 0, \label{eq:listcond}
%  \end{equation}
%  the list size is bounded by
%  \begin{equation}
%    \ell \leq \frac{N(d/2-t_\D)}{t_\I^2 - N(t_\I - (d/2 - t_\D) )}.\label{eq:listbound}
%  \end{equation}    
  \ignore{
    For any word $v \in \Sigma^N$ of length $N$ and non negative integers $t_\I$ and $t_\D$,
    let $\ell \triangleq \max_{v \in \Sigma^N}|\lball(v, t_\D, t_\I) \cap C|$ be the maximum list size.
    such that
    $t_\I \geq 0$, $0 \leq t_\D < d/2$, and
    %$0 \leq t_\I < N/2$, $0 \leq t_\D < d/2$, and
    \begin{equation}
      %\left( t_\I + t_\D -\frac{d}{2}\right) N < t_\I^2, \label{eq:listcond}
      %N < \frac{t_I^2}{t_I+t_D-d/2}\label{eq:listcond}
      %t_\I + t_\D < \frac{d}{2} + \frac{t_\I^2}{N}, \label{eq:listcond}
      %    N < \frac{t_\I^2}{t_\I - (d/2 - t_\D)}, \label{eq:listcond}
      %t_\I^2 - N(t_\I - (d/2 - t_\D)) > 0,
      d - 2 t_\D - \sqrt{t_\I n} > 0\label{eq:listcond}
    \end{equation}
    %$$0 \leq t_\I < \frac{N}{2}, \ 0 \leq t_\D < \frac{d}{2}, \ t_\I + t_\D < \frac{d}{2}+\frac{t_\I^2}{N},$$
    the maximum list size $\ell \triangleq \max_{v \in \Sigma^N}|\lball(v, t_\D, t_\I) \cap C|$ is bounded by
    %$$\ell \leq \frac{N(d-2t_\D)}{2t_\I^2-N(2(t_\I+t_\D)-d)}.$$
    \begin{equation}
      \ell \leq \frac{d-2t_\D}{d-2t_\D-\sqrt{t_\I n}}.
      %      \ell \leq \frac{N(d/2-t_\D)}{t_\I^2 - N(t_\I - (d/2 - t_\D) )}. % \triangleq \ell_\mathsf{L}(n, d, N,  t_\I, t_\D).
    \end{equation}
  } % end of ignore
\end{lemma}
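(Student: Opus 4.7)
The plan is to follow the Bassalygo--Roth double-counting strategy for the Hamming Johnson bound, replacing ``coordinate agreements'' with ``positions of $v$ appearing in a common subsequence with $c_i$.'' Since $v$ has length $N = n + t_\I - t_\D$ and every $c_i \in \lball(v, t_\D, t_\I) \cap C$ has length $n$, any edit sequence witnessing $c_i \in \lball(v, t_\D, t_\I)$ uses at most $t_\I$ deletions of symbols of $v$, so at least $n - t_\D$ symbols of $v$ survive unchanged and form a common subsequence with $c_i$. For each of the $\ell$ codewords I fix one such witness and record its underlying positions in $v$ as a set $E_i \subseteq [N]$ with $|E_i| = n - t_\D$; by construction each $E_i$ comes equipped with an order-preserving matching to positions in $c_i$.

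The key geometric step is the intersection bound $|E_i \cap E_j| \le n - d/2$ for every pair $i \ne j$. Reading off the symbols $v_p$ with $p \in E_i \cap E_j$ in increasing order of $p$ yields a subsequence of $v$; since both $E_i$ and $E_j$ are order-preserving embeddings, that same string is simultaneously a subsequence of $c_i$ and of $c_j$. Hence the longest common subsequence of $c_i$ and $c_j$ has length at least $|E_i \cap E_j|$, and combining this with the identity $\ldis(c_i, c_j) = 2n - 2\,\mathrm{LCS}(c_i, c_j) \ge d$ gives the claim.

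The rest is a standard double count. Setting $n_k = |\{i : k \in E_i\}|$ for $k \in [N]$, I have $\sum_k n_k = \ell(n - t_\D)$ and $\sum_k \binom{n_k}{2} = \sum_{i<j} |E_i \cap E_j| \le \binom{\ell}{2}(n - d/2)$. Convexity of $x \mapsto \binom{x}{2}$ lower-bounds the left side by $N \binom{\ell(n - t_\D)/N}{2}$, and solving the resulting quadratic inequality in $\ell$ yields \eqref{eq:listupbound}. A short calculation shows that the hypothesis \eqref{eq:listcond} is exactly equivalent to the positivity of the coefficient of $\ell$ in the reduced inequality, which is precisely what allows one to invert it into a finite upper bound on the list size.

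The delicate point is step two: the argument must respect the positional orders of $c_i$ and $c_j$ simultaneously, not just that of $v$. Routing the common subsequence through $v$ and invoking the order-preserving property of both $E_i$ and $E_j$ handles this cleanly, but since this is presumably where the intermediate step of~\cite{WZ18} breaks down, it deserves a careful write-up rather than a one-line justification. The final algebraic manipulation in step three is routine, so I expect no real obstacle past setting up the embeddings correctly.
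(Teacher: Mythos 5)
Your argument is correct and is essentially the paper's own proof in complementary notation: your $E_i$ is the complement $[N]\setminus\mathcal{E}^{(i)}$ of the paper's set of positions deleted from $v$, your step $d\le \ldis(c_i,c_j)=2n-2\,\mathrm{LCS}(c_i,c_j)\le 2n-2|E_i\cap E_j|$ is numerically the same as the paper's edit-path bound $\ldis(c_i,c_j)\le|\mathcal{D}^{(i)}|+|\mathcal{D}^{(j)}|+|\mathcal{E}^{(i)}\setminus\mathcal{E}^{(j)}|+|\mathcal{E}^{(j)}\setminus\mathcal{E}^{(i)}|$, and your convexity double count is the paper's Cauchy--Schwarz step, yielding the identical quadratic inequality in $\ell$ and hence the same bound under the same positivity condition. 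The only point to tidy in a full write-up is that a witness may leave more than $n-t_\D$ positions of $v$ matched, so one truncates $E_i$ to exactly $n-t_\D$ positions (harmless, since subsets retain the order- and symbol-preserving matching), mirroring the paper's normalization to exactly $t_\I$ insertions and $t_\D$ deletions.
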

\begin{proof}
  Consider the set of codewords $\mathcal{L} \triangleq \lball(v, t_\D, t_\I) \cap C = \{ c_1, c_2, \dots, c_\ell\}$.
  Note that every codeword in $\mathcal{L}$ can be transformed to $v$ by exactly $t_\I$ insertions and $t_\D$ deletions.
  \ignore{
  Namely, the word $v$ can be obtained from each codeword $c_i$
  by at most $t_\I$ insertions and at most $t_\D$ deletions.
  Without loss of generality, we can assume that every codeword in $\mathcal{L}$ can be transformed to $v$
  by exactly $t_\I'$ insertions and $t_\D'$ deletions.
  %for some $t_\I'$ and $t_\D'$ satisfying
  %$t_\I' \leq t_\I$, $t_\D' \leq t_\D$, and $N = n + t_\I' - t_\D'$.
  This is because if $c^* \in \mathcal{L}$ is transformed to $v$ by exactly $t_\I^*$ insertions and $t_\D^*$ deletions with $t_\I^* < t_\I'$ and $t_\D^* < t_\D'$,
  %and $t_\I^2$ insertions and $t_\D^2$ deletion, respectively, with $t_\I^1 < t_\I^2$,
  then $v$ can be obtained from $c^*$ by exactly $t_\I'$ insertions and $t_\D'$ deletions,
  in which $t_\I' - t_\I^* ( = t_\D' - t_\D^*)$ symbols are additionally inserted and deleted.
  %since it holds that $t_\I^1 - t_\D^1 = t_\I^2 - t_\D^2 = N - n$, $t_\I^1 < t_\I^2 \leq t_\I$, $t_\D^1 < t_\D^2 \leq t_\D$.
  }
  For each $c_i$, let $\mathcal{D}^{(i)} \subseteq [n]$ and $\mathcal{E}^{(i)} \subseteq [N]$
  be sets of positions for which
  %$v$ can be obtained from $c_i$ by deleting symbols of positions in $\mathcal{D}^{(i)}$
  %and inserting symbols of positions in $\mathcal{E}^{(i)}$.
  the word obtained from $c_i$ by deleting symbols at positions 
  $\mathcal{D}^{(i)}$ is equal to the word obtained from $v$ by deleting symbols at positions $\mathcal{E}^{(i)}$.
  We can choose $\mathcal{D}^{(i)}$ and $\mathcal{E}^{(i)}$ such that
  $|\mathcal{D}^{(i)}| = t_\D$ and $|\mathcal{E}^{(i)}| = t_\I$ for every $i \in [\ell]$.

  Let $c_i, c_j$ be distinct codewords in $\mathcal{L}$.
  We observe that $c_j$ can be obtained from $c_i$ by
  \begin{enumerate}
    \item deleting symbols at positions $\mathcal{D}^{(i)}$;
  \item inserting some symbols at positions  $\mathcal{E}^{(i)}$ to get $v$;
  \item deleting symbols at positions  $\mathcal{E}^{(j)}$ from $v$;
    and
    \item inserting some symbols at positions $\mathcal{D}^{(j)}$ to get $c_j$.
  \end{enumerate}
  In the above, procedures 2 and 3 can be simplified as follows:
  \begin{enumerate}
  \item[2$'$)] inserting some symbols at positions $\mathcal{E}^{(i)} \setminus \mathcal{E}^{(j)}$;
  \item[3$'$)] deleting symbols at positions $\mathcal{E}^{(j)} \setminus \mathcal{E}^{(i)}$.
    \end{enumerate}
  %(2-3) inserting some symbols of positions in $(\mathcal{E}^{(i)} \cup \mathcal{E}^{(j)}) \setminus (\mathcal{E}^{(i)} \cap \mathcal{E}^{(j)})$.
  Thus, we have that
  \begin{equation}
    \ldis(c_i, c_j) \leq |\mathcal{D}^{(i)}| + |\mathcal{E}^{(i)} \setminus \mathcal{E}^{(j)}| + |\mathcal{E}^{(j)} \setminus \mathcal{E}^{(i)}| + |\mathcal{D}^{(j)}|.    \label{eq:dupper}
    %|(\mathcal{E}^{(i)} \cup \mathcal{E}^{(j)}) \setminus (\mathcal{E}^{(i)} \cap \mathcal{E}^{(j)})| 
  \end{equation}

  Next, we consider the sum of distances of all $\ell(\ell-1)$ distinct pair of codewords in $\mathcal{L}$.
  Namely, we define
  $$\lambda \triangleq \sum_{i \in [\ell]} \sum_{j \in [\ell] \setminus \{i\}} \ldis(c_i,c_j).$$
  Since $\ldis(c_i,c_j) \geq d$  for distinct codewords $c_i,c_j$, it holds that
  \begin{equation}
    \lambda \geq \ell(\ell-1)d. \label{eq:lambdalower}
  \end{equation}
  On the other hand, by~(\ref{eq:dupper}), we have that
  \begin{align}
    &\lambda \leq \sum_{i \in [\ell]} \sum_{j \in [\ell] \setminus \{i\}} \left( |\mathcal{D}^{(i)}| + |\mathcal{D}^{(j)}| %\right.\nonumber \\
    %& \qquad \left.
    + |\mathcal{E}^{(i)} \setminus \mathcal{E}^{(j)}| + |\mathcal{E}^{(j)} \setminus \mathcal{E}^{(i)}|
    \right).\label{eq:lambdaupper}
  \end{align}
  Then, it holds that
  \begin{align}
    %&
    \sum_{i \in [\ell]} \sum_{j \in [\ell] \setminus \{i\}} \left(|\mathcal{D}^{(i)}| + |\mathcal{D}^{(j)}|\right)  % \nonumber \\
    & = \sum_{i \in [\ell]} \sum_{j \in [\ell] \setminus \{i\}} \sum_{k \in [n]} \left( I(k \in \mathcal{D}^{(i)}) + I(k \in \mathcal{D}^{(j)}) \right) \nonumber\\
    & = 2 \sum_{k \in [n]} \sum_{i \in [\ell]} \sum_{j \in [\ell] \setminus \{i\}} I(k \in \mathcal{D}^{(i)}) \nonumber \\
    & = 2(\ell-1) \sum_{k \in [n]} X_k, \label{eq:ddupper}
  \end{align}
  where $I(\cdot)$ is the indicator function such that $I(P) = 1$ if predicate $P$ is true, and $I(P) = 0$ otherwise,
  and $X_k = \sum_{i \in [\ell]} I(k \in \mathcal{D}^{(i)})$.
  Similarly,
  \begin{align}
    %&
    \sum_{i \in [\ell]} \sum_{j \in [\ell] \setminus \{i\}}\left( |\mathcal{E}^{(i)} \setminus \mathcal{E}^{(j)}| + |\mathcal{E}^{(j)} \setminus \mathcal{E}^{(i)}| \right)%\nonumber\\
    %|(\mathcal{E}^{(i)} \cup \mathcal{E}^{(j)}) \setminus (\mathcal{E}^{(i)} \cap \mathcal{E}^{(j)}|\nonumber\\
    & = \sum_{k' \in [N]} \sum_{i \in [\ell]} \sum_{j \in [\ell] \setminus \{i\}} \left( I(k'\in \mathcal{E}^{(i)} \setminus \mathcal{E}^{(j)}) % \right. %\nonumber\\
    %& \qquad \left.
    + I(k'\in \mathcal{E}^{(j)} \setminus \mathcal{E}^{(i)}) \right)\nonumber\\
    & = 2 \sum_{k' \in [N]} \sum_{i \in [\ell]} \sum_{j \in [\ell] \setminus \{i\}}  I(k'\in \mathcal{E}^{(i)} \setminus \mathcal{E}^{(j)}) \nonumber\\
    & = 2\sum_{k' \in [N]} Y_{k'}( \ell - Y_{k'}), \label{eq:eeupper}
  \end{align}
  where $Y_{k'} = \sum_{i \in [\ell]} I(k' \in \mathcal{E}^{(i)})$.

  It follows from (\ref{eq:lambdalower}), (\ref{eq:lambdaupper}), (\ref{eq:ddupper}), and (\ref{eq:eeupper}) that
  \begin{align}
    %&
    \ell(\ell-1)d %\nonumber\\
    & \leq 2(\ell-1) \sum_{k \in [n]} X_k + 2\sum_{k' \in [N]} Y_{k'}( \ell - Y_{k'}) \nonumber\\
    & = 2(\ell-1) \sum_{k \in [n]} X_k + 2\ell\sum_{k' \in [N]}Y_{k'} - 2\sum_{k' \in [N]}Y_{k'}^2 \nonumber\\
    & \leq 2(\ell-1) \sum_{k \in [n]} X_k + 2\ell\sum_{k' \in [N]}Y_{k'} - \frac{2}{N}\left( \sum_{k'\in [N]}Y_{k'}\right)^2,
    %\label{eq:ineq}
%    & = 2(\ell-1) \sum_{k \in [n]} X_k + 2 \sum_{k' \in[N]}\left( -\left( Y_{k'} - \frac{\ell t_\I}{N}\right)^2 +\right)
  \end{align}
  where we use the Cauchy-Schwarz inequality in the last step.
  Note that $\sum_{k \in [n]} X_k = \ell t_\D$ and $\sum_{k' \in [N]}Y_{k'} = \ell t_\I$.
  %Consider the function $f(x) = 2\ell x - (2/N) x^2$.
  %It is not difficult to see that $f(x) \geq 0$ for $x \in [0,\ell N]$.
  %Since $\ell t_\I \leq \ell N$, we have that
  Thus, we have that
  \begin{align*}
%    & \ell(\ell-1)d \nonumber\\
    \ell(\ell-1)d  & \leq 2 \ell(\ell-1)t_\D + 2\ell^2 t_\I - \frac{2}{N}\ell^2t_\I^2\\
    & = 2\ell(\ell-1)t_\D + \frac{2\ell^2t_\I(N-t_\I)}{N}.
%    & = 2\ell(\ell-1)t_\D' + \frac{2\ell^2 t_\I'(n-t_\D')}{N}\\
%    & \leq 2\ell(\ell-1)t_\D +  \frac{2\ell^2t_\I n}{N}
  \end{align*}
%  where 
%  the last inequality follows from the fact that $t_\D' \leq t_\D$ and $t_\I' \leq t_\I$.
  The inequality can be rewritten as
  %Since $\ell > 0$, it holds that
  \[
  \left( \frac{d}{2} - t_\D - \frac{t_\I(n-t_\D)}{N} \right) \ell^2 \leq \left( \frac{d}{2} - t_\D \right) \ell.
  %\left(d-2t_\D-\frac{2t_\I n}{N} \right) \ell^2 \leq (d-2t_\D)\ell.
  %(d-2t_\D-\sqrt{t_\I n}) \ell^2 \leq (d - 2t_\D)\ell.
  \]
%  The inequality can be rewritten as
%  \begin{align*}
%    &\left(d - 2t_\D - 2t_\I + \frac{2}{N}t_\I^2 \right) \ell^2 \leq (d - 2t_\D) \ell.
%  \end{align*}
  Since the coefficient of $\ell^2$ is guaranteed to be positive by~(\ref{eq:listcond}),
  we have that
  \begin{align}
    \ell \leq \frac{N(d/2 - t_\D)}{N(d/2-t_\D)-t_\I(n-t_\D)}.
    %\ell \leq \frac{N(d/2-t_\D)}{N(d/2-t_\D) - t_\I n}.
    %\ell \leq \frac{d-2t_\D}{d-2t_\D-\sqrt{t_\I n}}.
    %&
    %\ell \leq \frac{d-2t_\D}{d-2t_\D -2t_\I + \frac{2}{N}t_\I^2} %\nonumber \\
    %& = \frac{N(d/2-t_\D)}{t_\I^2 - N(t_\I + t_\D - d/2)}.\nonumber
  \end{align}
  %The condition $t_\D < d/2$ is required because $\ell$ should take positive values.
  Therefore, the statement follows.
%  We will derive an upper bound on the right-hand side of (\ref{eq:ineq}).
%  Note that $\sum_{k \in [n]} X_k \leq \ell t_\D$ and $\sum_{k' \in [N]}Y_{k'} \leq \ell t_\I$.
%  Eq.~(\ref{eq:ineq}) can be rewritten as
%  \begin{align}
%    & \ell(\ell-1)d \leq 2(\ell-1) \sum_{k \in [n]} X_k  + 2 \sum_{k' \in[N]}\left( -\left( Y_{k'} - \frac{\ell t_\I}{N}\right)^2  \right.\nonumber\\
%    & \qquad \qquad \quad \left. + \frac{\ell}{N}(N-2t_\I)Y_{k'} \right) +\frac{2\ell^2t_\I^2}{N}. \label{eq:ineq2}
%  \end{align}
%  It is not difficult to see that, as long as $t_\I \leq N/2$, the right-hand side of inequality (\ref{eq:ineq2}) takes its maximum value
%  when $\sum_{k \in [n]}X_k = \ell t_\D$ and $Y_{k'} = \ell t_\I/N$ for all $k' \in [N]$.
%  Hence,
%  \begin{align}
%    \ell(\ell-1)d &\leq 2 \ell(\ell-1) t_\D + \frac{2\ell^2 t_\I(N-2t_\I)}{N} + \frac{2\ell^2t_\I^2}{N} \nonumber\\
%    &  = 2 \ell(\ell-1) t_\D + 2\ell^2 t_\I - \frac{2\ell^2t_\I^2}{N}. \nonumber
%  \end{align}
%  This inequality can be rewritten as
%  \begin{align*}
%    \left( d - 2(t_\I+t_\D)+\frac{2t_\I^2}{N}\right)\ell^2 \leq (d-2t_\D) \ell.
%  \end{align*}
%  Thus, we have that
%  \begin{align*}
%    \ell \leq \frac{N(d-2t_\D)}{N(d-2(t_\I+t_\D))+2t_\I^2},
%  \end{align*}
%  since the denominator takes positive values when condition~(\ref{eq:listcond}) holds. %$t_\I + t_\D < {d}/{2}+{t_\I^2}/{N}$.
\end{proof}

%Note that (\ref{eq:listcond}) is satisfied only when $t_\D' < d/2$.

Lemma~\ref{lem:bound} gives an upper bound on list size when exactly $t_\I$ insertions and $t_\D$ deletions occur.
We evaluate a list-size bound when $t_\I^*$ and $t_\D^*$ are given as upper bounds on the numbers of insertions and deletions, respectively.
%Note that (\ref{eq:listcond}) is a condition for $t_\I'$ and $t_\D'$, but not for $t_\I$ and $t_\D$.
It is not difficult to see that (\ref{eq:listcond}) is equivalent to
\begin{equation}
t_\I < \frac{(d/2 - t_\D)(n-t_\D)}{n - d/2}.\label{eq:listcond2}
\end{equation}
Since (\ref{eq:listcond2}) holds only for $t_\D < d/2$, the right-hand side of the inequality is monotonically decreasing on $t_\D$.
Also, the left-hand side is monotonically increasing on $t_\I$.
Thus, if (\ref{eq:listcond2}) is satisfied for $t_\I = t_\I^*$ and $t_\D = t_\D^*$,
then it holds for all non-negative integers $t_\I$ and $t_\D$ satisfying $t_\I \leq t_\I^*$ and $t_\D \leq t_\D^*$.

Regarding the list size bound,
%since $N = n + t_\I' - t_\D'$,
(\ref{eq:listupbound}) is equivalent to
\begin{equation}
  \ell \leq \frac{(n+t_\I-t_\D)(d/2 - t_\D)}{(d/2 - t_\D)(n-t_\D) - (n-d/2)t_\I}.\label{eq:listupbound2}
  %  \frac{(d/2 - t_\D' + n-d/2 + t_\I')(d/2 - t_\D')}{( n - (d/2-t_\D'))(d/2 - t_\D') + (d/2)(d/2 - t_\D') - t_\I'(n-d/2)}. \label{eq:listupbound2}
\end{equation}
If (\ref{eq:listcond2}) is satisfied for $t_\I = t_\I^*$ and $t_\D = t_\D^*$,
the list size is bounded above by
\begin{align*}
 & \max_{t_\I \leq t_\I^*,t_\D \leq t_\D^*} \left\{ \frac{(n+t_\I-t_\D)(d/2 - t_\D)}{(d/2 - t_\D)(n-t_\D) - (n-d/2)t_\I} \right\} %\\
  %&
  \leq \frac{(d/2)(n+t_\I^*)}{(d/2 - t_\D^*)(n-t_\D^*) - (n-d/2)t_\I^*}, % = \frac{(d/2)(n+t_\I)}{(d/2)(n+t_\I-t_\D)-n(t_\I+t_\D)+t_\D^2},
\end{align*}
where the inequality follows from the fact that the denominator is minimized when $t_\I = t_\I^*$ and $t_\D = t_\D^*$,
and that the numerator is maximized when $t_\I = t_\I^*$ and $t_\D = 0$.

Hence, we have the following theorem.
\begin{theorem}\label{thm:bound}
Let $C \subseteq \Sigma^n$ be a code of minimum Levenshtein distance $d$.
  For non-negative integers $t_\I$ and $t_\D \leq n$, positive integer $N \in [n - t_\D, n + t_\I]$,
  and any word $v \in \Sigma^N$,
  let $\ell = |\lball(v, t_\D, t_\I) \cap C|$ be the list size when $v$ is received.
  If
  \begin{equation}
    t_\I < \frac{(d/2 - t_\D)(n-t_\D)}{n - d/2}\label{eq:listcond3}
  \end{equation}
  then 
  \begin{equation}
    \ell \leq \frac{(d/2)(n+t_\I)}{(d/2-t_\D)(n-t_\D)-(n-d/2)t_\I}.\label{eq:listsize}
    %\ell \leq \frac{N(d/2-t_\D)}{N(d/2-t_\D) - t_\I n}.
    %\frac{d-2t_\D}{d-2t_\D-\sqrt{t_\I n}}.
    %\left(d - 2t_\D - 2t_\I + \frac{2}{N}t_\I^2 \right) \ell \leq d - 2t_\D.
  \end{equation}
  
\end{theorem}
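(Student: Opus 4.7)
My plan is to derive Theorem~\ref{thm:bound} from Lemma~\ref{lem:bound} by reducing the case of \emph{upper bounds} on the insertion and deletion counts to the case of \emph{exact} counts, and then maximizing the resulting expression over the allowed range.

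Fix a received word $v \in \Sigma^N$ with $N \in [n - t_\D, n + t_\I]$. I would first exhibit non-negative integers $t_\I' \leq t_\I$ and $t_\D' \leq t_\D$ satisfying $N = n + t_\I' - t_\D'$ such that every codeword in $\lball(v, t_\D, t_\I) \cap C$ is transformable to $v$ by \emph{exactly} $t_\I'$ insertions and $t_\D'$ deletions. Such a pair always exists: for each codeword $c$ in the list, the exact insertion/deletion counts $(a_c, b_c)$ must satisfy $a_c - b_c = N - n$, $a_c \leq t_\I$, and $b_c \leq t_\D$. One can then take $t_\I' = \max_c a_c$ and $t_\D' = t_\I' - (N - n) = \max_c b_c$, and pad each $(a_c, b_c)$ up to $(t_\I', t_\D')$ by inserting and immediately deleting the same symbol an appropriate number of times, which is a no-op.

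Next I would verify that hypothesis (\ref{eq:listcond3}) at $(t_\I, t_\D)$ implies hypothesis (\ref{eq:listcond}) of Lemma~\ref{lem:bound} at $(t_\I', t_\D')$. The equivalence of (\ref{eq:listcond}) and (\ref{eq:listcond2}) is an algebraic rearrangement valid for $n > d/2$; in the rewritten form, the right-hand side is monotonically decreasing in $t_\D'$ while the left-hand side is monotonically increasing in $t_\I'$, so validity at $(t_\I, t_\D)$ propagates to every $(t_\I', t_\D') \leq (t_\I, t_\D)$. Applying Lemma~\ref{lem:bound} then yields the bound (\ref{eq:listupbound2}) with $(t_\I', t_\D')$ in place of $(t_\I, t_\D)$.

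Finally, to remove the dependence on the specific pair $(t_\I', t_\D')$, I would upper bound (\ref{eq:listupbound2}) by its maximum over the box $[0, t_\I] \times [0, t_\D]$, treating the numerator and the denominator separately. The numerator $(n + t_\I' - t_\D')(d/2 - t_\D')$ is maximized at $(t_\I, 0)$, yielding $(d/2)(n + t_\I)$, while the denominator $(d/2 - t_\D')(n - t_\D') - (n - d/2) t_\I'$ is minimized at $(t_\I, t_\D)$; combining these gives precisely (\ref{eq:listsize}). The step that most requires care is the denominator monotonicity in $t_\D'$: expansion gives a quadratic in $t_\D'$ whose derivative $2 t_\D' - n - d/2$ must be verified non-positive throughout the range, which follows because (\ref{eq:listcond3}) implicitly forces $t_\D < d/2 \leq (n + d/2)/2$. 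Once this and the analogous monotonicity checks for the numerator are in hand, the rest is exactly the chain of inequalities spelled out in the text immediately preceding the theorem statement.
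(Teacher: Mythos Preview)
Your proposal is correct and follows essentially the same route as the paper: reduce to Lemma~\ref{lem:bound} at some pair $(t_\I',t_\D') \leq (t_\I,t_\D)$, check that condition~(\ref{eq:listcond2}) propagates downward by monotonicity, and then bound the resulting expression by maximizing the numerator (at $(t_\I,0)$) and minimizing the denominator (at $(t_\I,t_\D)$) separately. The only difference is cosmetic: you explicitly construct the pair $(t_\I',t_\D')$ from the codewords in the list via $t_\I'=\max_c a_c$, whereas the paper simply takes the maximum of the Lemma~\ref{lem:bound} bound over the whole box without naming a specific pair; both lead to the same inequality chain and the same monotonicity checks you spell out.
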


\section{Discussion on the Bound}\label{sec:discussion}

First, note that (\ref{eq:listcond3}) is satisfied only when $t_\D < d/2$.
Thus, the bound in Theorem~\ref{thm:bound} is meaningful only when insertions occur because
the list size is at most one when $t_\I = 0$ and $t_\D < d/2$.

%Note that this fact does not imply that if $t_\I' = 0$
\ignore{
  We evaluate the decoding radii $t_\I$ and $t_\D$.
More specifically, we would like to know the region of $(t_\I, t_\D)$ such that
(\ref{eq:listcond}) is satisfied for any $t_\I' \leq t_\I$ and $t_\D' \leq t_\D$.

%represent the normalized radii $t_\I'/n$ and $t_\D'/n$
%as functions of the normalized minimum distance $\delta \triangleq d/(2n)$. % and the relative length of the received word $\nu \triangleq N/n$.

%First, we consider the sum of normalized radii $(t_\I' + t_\D')/n$.
Condition~(\ref{eq:listcond}) is equivalent to
\begin{equation}
  \left( \frac{d}{2} - t_\D' \right) (n+t_\I'-t_\D') > t_\I'(n-t_\D'), \label{eq:discussdel}
\end{equation}
which can be rewritten as
\[ \left( n - \frac{d}{2} \right) t_\I' < \left( \frac{d}{2} - t_\D'\right) (n - t_\D'),\]
and thus,
\begin{equation}
  t_\I' < \frac{(d/2 - t_\D')(n-t_\D')}{n-d/2}.\label{eq:insbound}
\end{equation}
Note that the right-hand side of (\ref{eq:insbound})  is monotonically decreasing function of $t_\D'$.
Thus, if (\ref{eq:insbound}) is satisfied for $t_\I' = t_\I$ and $t_\D' = t_\D$,
it implies that they are also satisfied for all non-negative integers $t_\I'$ and $t_\D'$ with $t_\I' \leq t_\I$ and $t_\D' \leq t_\D$.
Therefore, the condition
\begin{equation}
  \left( \frac{d}{2} - t_\D \right) (n+t_\I-t_\D) > t_\I(n-t_\D) \label{eq:condupbound}
\end{equation}
can be used for the bounds on $t_\I$ and $t_\D$.
}

\ignore{
\[ \left( n - \frac{d}{2} \right) (t_\I' + t_\D') < \frac{d}{2} \cdot n - d t_\D' + t_\D'^2.\]
%\left( \frac{d}{2} - t_\D'\right) (n - t_\D') + \left( n - \frac{d}{2}\right) t_\D'.\]
%\[ \left( n - \frac{d}{2} \right) t_\I' < \left( \frac{d}{2} - t_\D'\right) (n - t_\D').\]
%Since $t_\D' \in [0,d/2)$, we set $t_\D' \triangleq \rho' (d/2)$ for $\rho' \in [0,1)$.
Then, we have
\begin{equation}
  t_\I'+t_\D' < \frac{d}{2} + \frac{(d/2 - t_\D')^2}{n - d/2}.%\label{eq:insdelbound}
\end{equation}
It also follows from (\ref{eq:discussdel}) that
\[ \left( n - \frac{d}{2} \right) t_\I' < \left( \frac{d}{2} - t_\D'\right) (n - t_\D'),\]
and thus,
\begin{equation}
  t_\I' < \frac{(d/2 - t_\D')(n-t_\D')}{n-d/2}.%\label{eq:insbound}
\end{equation}
Also, (\ref{eq:discussdel}) can be rewritten as
\begin{align*}
  \left(\frac{n+d/2}{2} - t_\D'\right)^2 %& > \frac{(n+d/2)^2}{4} - \frac{d}{2}(n+t_\I') + t_\I'n\\
  & > \frac{(n-d/2)(n - d/2 + 4 t_\I')}{4}.
\end{align*}
Since $t_\D' < d/2$ and $d \leq 2n$, it holds that $t_\D' < (d/2+d/2)/2 \leq (n+d/2)/2$.
Therefore, we have that
\begin{equation}
  t_\D' < \frac{ n+d/2 -\sqrt{(n-d/2)(n - d/2 +4t_\I')}}{2}.\label{eq:delbound}
\end{equation}
In the above, we represent the bounds on $t_\I' + t_\D'$, $t_\I'$, and $t_\D'$. % in (\ref{eq:insdelbound}), (\ref{eq:insbound}), and (\ref{eq:delbound}), respectively.
It is not difficult to see that, when $d$ is fixed,
the right-hand sides of (\ref{eq:insdelbound}), (\ref{eq:insbound}), and (\ref{eq:delbound})
are monotonically decreasing functions of $t_\D'$, $t_\D'$, and $t_\I'$, respectively.
Therefore, if these inequalities are satisfied for $t_\I' = t_\I$ and $t_\D' = t_\D$,
it implies that they are also satisfied for all non-negative integers $t_\I'$ and $t_\D'$ with $t_\I' \leq t_\I$ and $t_\D' \leq t_\D$.
}

By considering the case that $t_\I = t_\D = t$, we have the following corollary.
\begin{corollary}\label{cor:same}
  For a code $C \subseteq \Sigma^n$ of minimum Levenshtein distance $d$,
  a non-negative integer $t < n$, and any word $v \in \Sigma^N$ with $N \in [n-t, n+t]$,
  let $\ell = |\lball(v, t, t) \cap C|$ be the list size when $v$ is received.
  If
  \begin{equation}
    t < n - \sqrt{n(n-d/2)} \triangleq t_\mathsf{equal}
  \end{equation}
  then 
  \begin{equation*}
    \ell \leq \frac{(d/2)(n+t)}{(d/2-2t)n+t^2} \leq nd.
  \end{equation*}
\end{corollary}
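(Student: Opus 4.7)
The plan is to obtain Corollary~\ref{cor:same} as the direct specialization of Theorem~\ref{thm:bound} in which the number of insertions equals the number of deletions, i.e., $t_\I = t_\D = t$. All three parts of the statement---the range of admissible $t$, the explicit list-size bound, and the closed-form bound $nd$---should then fall out of straightforward algebra from the corresponding parts of Theorem~\ref{thm:bound}, using only the standing assumption $t < n$.

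First I would translate the hypothesis. Substituting $t_\I = t_\D = t$ into condition~(\ref{eq:listcond3}) gives $t(n - d/2) < (d/2 - t)(n-t)$. Expanding the right-hand side and cancelling the two $(d/2)t$ terms reduces this to the quadratic inequality
\[ t^2 - 2nt + (d/2)n > 0, \]
whose roots are $n \pm \sqrt{n(n-d/2)}$. The assumption $t < n$ rules out the larger root, so the condition is equivalent to $t < n - \sqrt{n(n-d/2)} = t_\mathsf{equal}$, which is exactly the hypothesis of the corollary.

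For the bound itself I would plug the same substitution into~(\ref{eq:listsize}). The numerator $(d/2)(n+t_\I)$ becomes $(d/2)(n+t)$ verbatim. Expanding the denominator $(d/2-t)(n-t) - (n-d/2)t$ and collecting terms produces $(d/2)n - 2tn + t^2 = (d/2 - 2t)n + t^2$, giving the middle expression in the corollary's chain of inequalities. For the closing bound $\ell \leq nd$, I would note that the denominator is positive by the condition just established and, since twice the denominator equals the positive integer $(d - 4t)n + 2t^2$, it is bounded below by a small absolute constant; combined with the elementary estimate $n + t \leq 2n$ on the numerator this yields $\ell \leq nd$ (in particular, whenever $d$ is even the denominator is itself a positive integer and the bound $\ell \leq (d/2)(n+t) \leq nd$ is immediate). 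This final integer-accounting step is the only place that demands attention, and it is routine rather than a genuine obstacle.
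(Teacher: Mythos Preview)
Your approach is correct and essentially identical to the paper's: both substitute $t_\I = t_\D = t$ into Theorem~\ref{thm:bound} and simplify, with the paper opting to rewrite the quadratic as $(n-t)^2 > n(n-d/2)$ before taking square roots rather than invoking the quadratic formula. For the closing inequality $\ell \leq nd$, your parenthetical already handles every case once you observe that the Levenshtein distance between two words of equal length $n$ is necessarily even (any sequence of insertions and deletions taking a length-$n$ word to a length-$n$ word uses equally many of each), so $d$ is even and the denominator $(d/2-2t)n+t^2$ is a positive integer.
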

\begin{proof}
  By setting $t_\I = t_\D = t$ in (\ref{eq:listcond3}) , we have
  $\left( n - {d}/{2}\right) t < \left( {d}/{2} - t\right) (n-t)$,
  which can be rewritten as
  $(n-t)^2 > n \left( n - {d}/{2} \right)$.
  Since $n - t > 0$, we have that $t < n - \sqrt{n(n-d/2)}$.
  The list size bound is obtained by setting $t_\I = t_\D = t$ in (\ref{eq:listsize}).
\end{proof}

%We consider the case that $t_\I = t_\D = t$.
%It follows from (\ref{eq:listcond3}) that
%\[ \left( n - \frac{d}{n}\right) t < \left( \frac{d}{2} - t\right) (n-t)\]
%\[ \left( \frac{d}{2} - t\right) n > t (n - t),\]
%which can be rewritten as
%\[ (n-t)^2 > n \left( n - \frac{d}{2} \right).\]
%Since $n - t > 0$, we have that $t < n - \sqrt{n(n-d/2)}$.

Let $C$ be a code of minimum Levenshtein distance $d$.
Then, the minimum Hamming distance $d'$ of $C$ is at least $d/2$.
If $d' = d/2$, Corollary~\ref{cor:same} gives the original Johnson bound in the Hamming metric.

%If a code $C$ has a minimum Levenshtein distance $d$, its minimum Hamming distance is at least $d/2$.
%Consider codes such that the minimum Hamming distance is half the minimum Levenshtein distance.
%For such codes, Corollary~\ref{cor:same} gives the original Johnson bound in the Hamming metric.

%More precisely, Corollary~\ref{cor:same} is stronger in the sense that the received word $v$
%can be obtained from codewords by $t$ inerstions and $t$ deletions,
%whearas the Johnson bound in the Hamming metric consider the case that the insertions occur
%at the same positions where the deletions occur.
%Note that if the minimum Hamming distance of the code is greater than half the minimum Levenshtein distance,
%the Johnson bound provides a tighter bound for the list decodability in the Hamming metric.

%This observation together with Corollary~\ref{cor:same} implies the Johnson bound in the Hamming metric.
%Thus, our bound is a generalization of the Johnson bound to the Levenshtein metric.
%More specifically, the Johnson bound is a special case of Corollary~\ref{cor:same}
%in which the insertions occur at the same positions where the deletions occur.

\begin{figure*}[!t]
    \centering
    \includegraphics[width=0.8\textwidth]{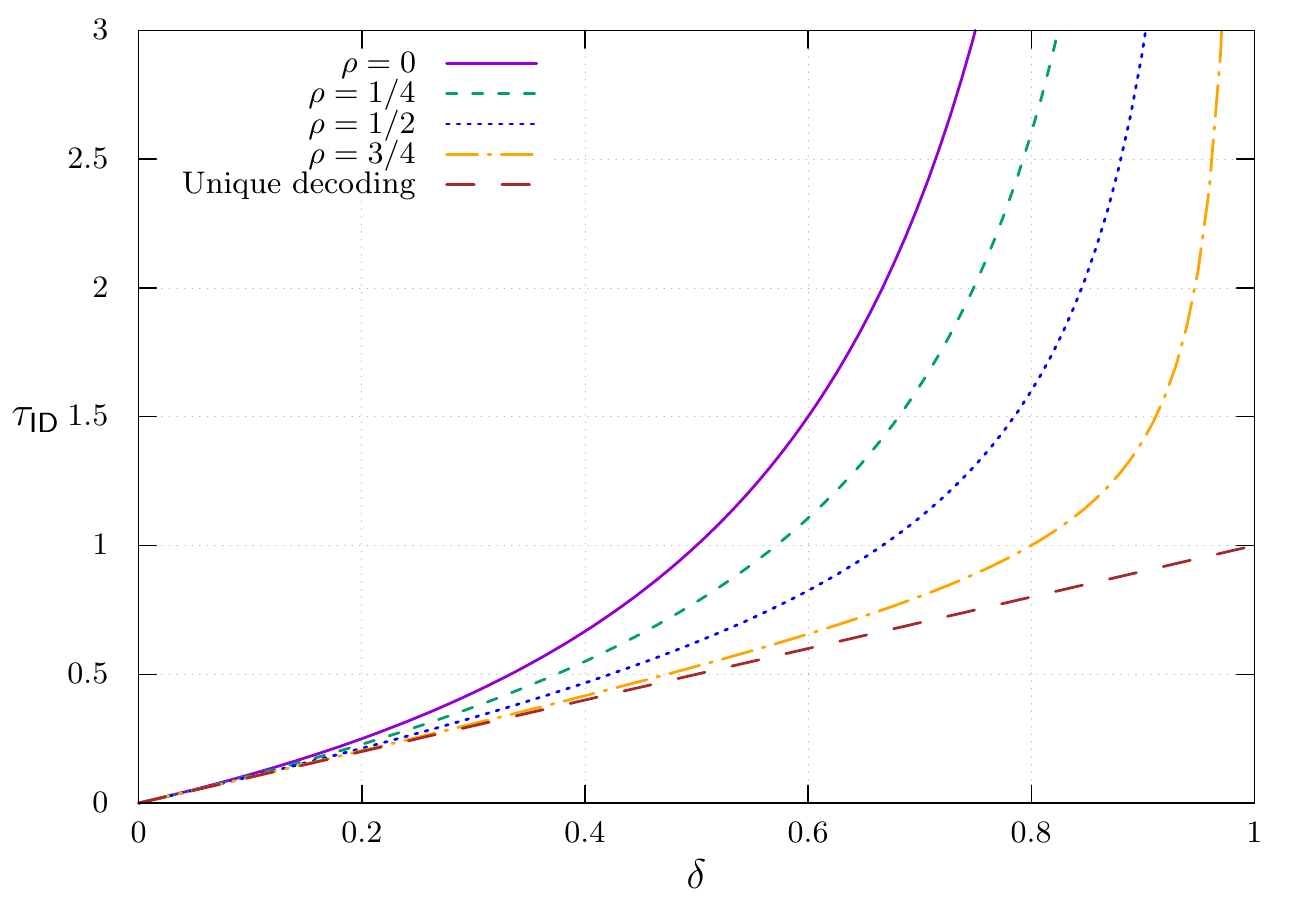}
    \caption{List decoding radius $\tau_\mathsf{ID}$ as a function of $\delta$ for various $\rho$.}
    \label{fig:insdelbound}
  \end{figure*}
  \begin{figure*}[t]
    \centering
    \includegraphics[width=0.8\textwidth]{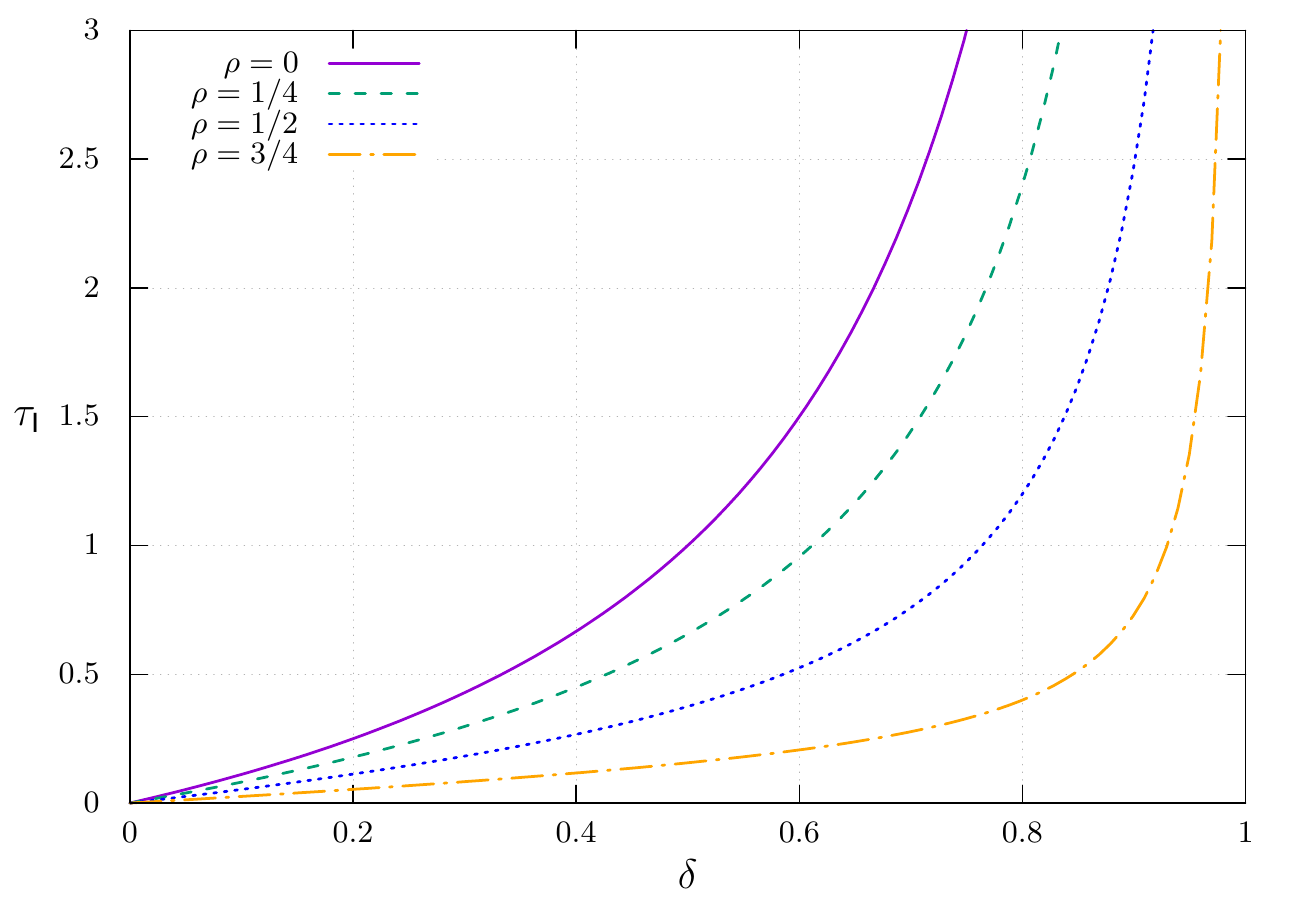}
    \caption{List decoding radius $\tau_\I$ as a function of $\delta$ for various $\rho$.}
    \label{fig:insbound}
  \end{figure*}

  \begin{figure*}[!t]
    \centering
    \includegraphics[width=0.8\textwidth]{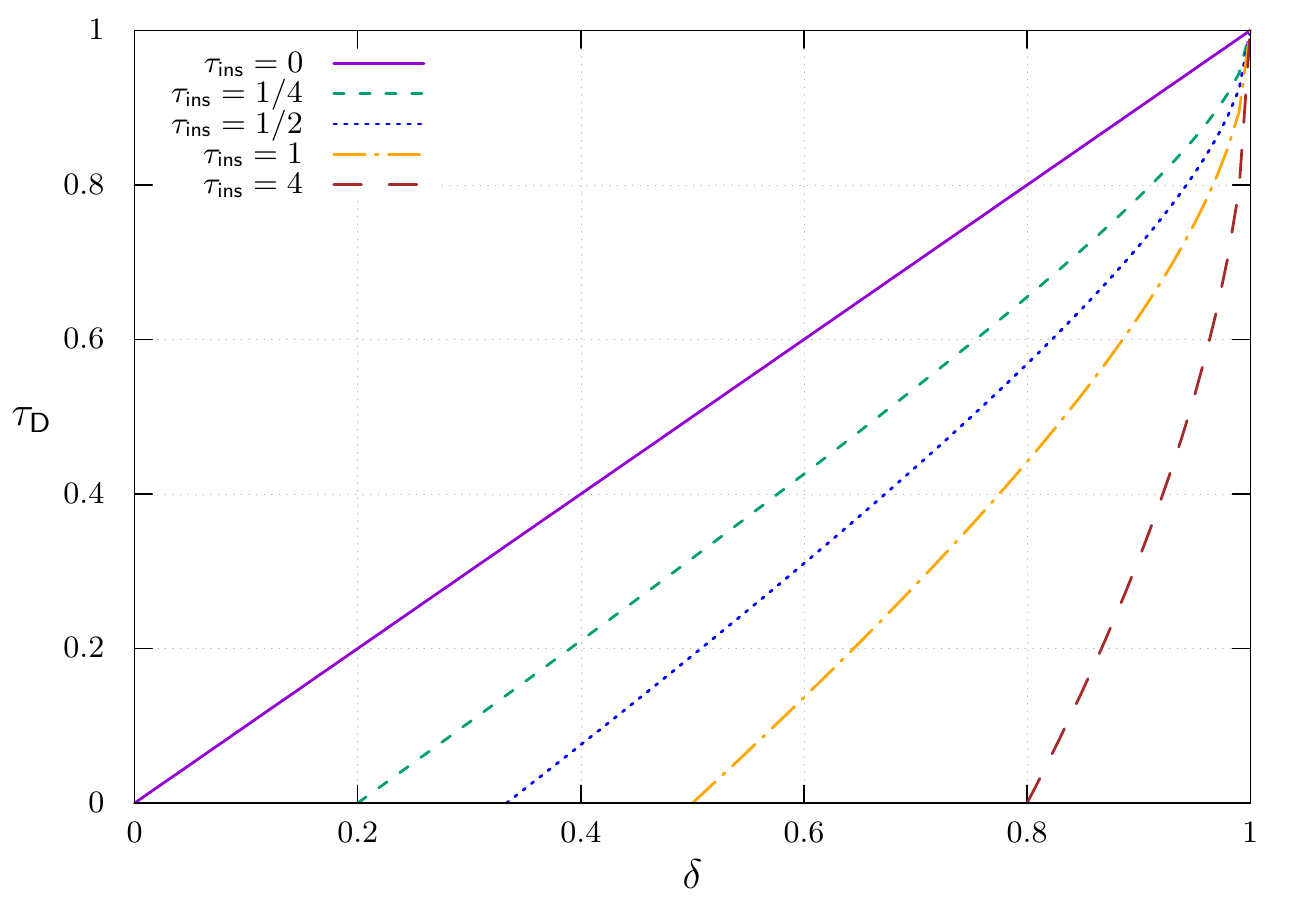}
    \caption{List decoding radius $\tau_\D$ as a function of $\delta$ for various $\tau_\mathsf{ins}$.}
   \label{fig:delbound}
  \end{figure*}

Next, we evaluate the normalized decoding radii $(t_\I + t_\D)/n$, $t_\I/n$, and $t_\D/n$ as functions of
the normalized minimum distance $\delta \triangleq d/(2n)$.
Condition~(\ref{eq:listcond3}) is equivalent to
\begin{align*}
  \frac{t_\I+t_\D}{n} & < \delta + \frac{(1-\rho)^2\delta^2}{1-\delta} \triangleq \tau_\mathsf{ID}(\delta,\rho),\\
  \frac{t_\I}{n} & <  \frac{(1 - \rho) \delta (1 - \rho \delta)}{1 - \delta} \triangleq \tau_\I(\delta,\rho),\\
  \frac{t_\D}{n} & < \frac{1}{2}\left( 1 + \delta - \sqrt{(1-\delta)(1 - \delta + 4 \tau_\mathsf{ins})} \right) \triangleq \tau_\D(\delta, \tau_\mathsf{ins}),
\end{align*}
where $\rho \triangleq t_\D/(d/2) \in [0,1)$ and $\tau_\mathsf{ins} \triangleq t_\I/n \geq 0$.
  %  $t_\D = \rho(d/2), t_\I = \tau_\mathsf{ins} n$ for $\rho \in [0,1), \tau_\mathsf{ins} \geq 0$.
  The notations used in this section are summarized in Table~\ref{tb:para}.

  \begin{table}[!t]
    \caption{Notations}\label{tb:para}
    \centering
    \begin{tabular}{ll}\hline
      $n$ & Code length \\
      $d \in [0,2n]$ & Minimum Levenshtein distance of the code\\
      $\delta = \frac{d}{2n} \in [0,1]$ & Normalized minimum Levenshtein distance\\
      $t_\I \geq 0$ & Upper bound on the number of insertions\\
      $t_\D \in [0, d/2)$ & Upper bound on the number of deletions\\
        $\tau_\mathsf{ins} \triangleq \frac{t_\I}{n} \geq 0$ & Normalized value of $t_\I$ w.r.t. $n$ \\
        $\rho = \frac{t_\D}{d/2} \in [0,1)$ & Normalized value of $t_\D$ w.r.t. $d/2$ \\
        %      $N \in [n-t_\D, n+t_\I]$ & Length of the received word\\
        \hline
    \end{tabular}
  \end{table}

Fig.~\ref{fig:insdelbound} and \ref{fig:insbound} illustrate $\tau_\mathsf{ID}$ and $\tau_\I$ as functions of $\delta$ for various $\rho$.
In Fig.~\ref{fig:delbound}, $\tau_\D$ is represented as a function of $\delta$ for various $\tau_\mathsf{ins}$.

  When only insertions occur, namely $\rho = 0$, we have $\tau_\mathsf{ID}(\delta,0) = \tau_\I(\delta,0) = \delta/(1-\delta)$.
  Thus, the list decoding radius $t_\I/n$ for insertions can take an arbitrarily large value by choosing sufficiently large $\delta$.

%    Regarding codes with constant alphabet size, 
  For any fixed constant $q$, Bukh et al.~\cite{BGH17} construct a $q$-ary code of rate $\Omega(1)$ that can correct a fraction of deletions approaching
  $1- \frac{2}{q+\sqrt{q}}$,
which implies that the normalized minimum Levenshtein distance $\delta$ of the code is also approaching $1- \frac{2}{q+\sqrt{q}}$.
For this code, when $\rho = 0$, the list decoding radius is $\tau_\mathsf{I}(\delta,0) = \frac{q + \sqrt{q}}{2} - 1$.
This implies that when $q=2$, the code is potentially list-decodable from a $0.707$-fraction of insertions in polynomial time.

We can see that the radius $t_\D/n$ for deletions can also take any value less than $1$.
\begin{corollary}\label{cor:summary}
    For any  $\tau_\I \geq 0$ and $\tau_\D \in [0,1)$,
      if there is a code of length $n$ and  minimum Levenshtein distance $d = 2 \delta n$ satisfying
      \begin{equation}
        \delta > \frac{\tau_\I+\tau_\D(1-\tau_\D)}{\tau_\I+1-\tau_\D} = 1 - \frac{(1-\tau_\D)^2}{\tau_\I+1-\tau_\D} \triangleq \delta_\mathsf{ID}(\tau_\I,\tau_\D),\label{eq:distlb}
      \end{equation}
      then the code is $(\tau_\I n, \tau_\D n, \ell)$-list decodable for 
      \[ \ell \leq \frac{\delta(\tau_\I+1)}{\gamma \, (\tau_\I + 1-\tau_\D)},\]
      where $\gamma = \delta - \delta_\mathsf{ID}(\tau_\I,\tau_\D) > 0$.
\end{corollary}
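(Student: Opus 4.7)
The plan is to directly invoke Theorem~\ref{thm:bound} with $t_\I = \tau_\I n$ and $t_\D = \tau_\D n$ and then carry out two algebraic simplifications: one to rewrite condition~(\ref{eq:listcond3}) as the stated lower bound on $\delta$, and another to rewrite the list-size expression~(\ref{eq:listsize}) in terms of $\tau_\I$, $\tau_\D$, $\delta$, and the slack $\gamma$. There is no new combinatorial content; everything follows by substitution using $d = 2\delta n$.

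First, I would verify the radius condition. Substituting $t_\I = \tau_\I n$, $t_\D = \tau_\D n$, and $d/2 = \delta n$ into~(\ref{eq:listcond3}) and dividing by $n$ gives $\tau_\I(1-\delta) < (\delta-\tau_\D)(1-\tau_\D)$. Expanding both sides and collecting the terms involving $\delta$ on the right yields $\tau_\I + \tau_\D(1-\tau_\D) < \delta(\tau_\I + 1 - \tau_\D)$, which (since $\tau_\I + 1 - \tau_\D > 0$ for $\tau_\D < 1$) is exactly $\delta > \delta_{\mathsf{ID}}(\tau_\I,\tau_\D)$. The identity $\delta_{\mathsf{ID}}(\tau_\I,\tau_\D) = 1 - (1-\tau_\D)^2/(\tau_\I+1-\tau_\D)$ is checked by a one-line manipulation, so condition~(\ref{eq:listcond3}) is satisfied and Theorem~\ref{thm:bound} applies.

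Next, I would plug the same substitutions into the list-size bound~(\ref{eq:listsize}). The numerator becomes $\delta n^2(1+\tau_\I)$. For the denominator, factoring $n^2$ out gives $n^2\bigl[(\delta-\tau_\D)(1-\tau_\D) - (1-\delta)\tau_\I\bigr]$, which after collecting terms equals
\[
n^2\bigl[\delta(1+\tau_\I-\tau_\D) - (\tau_\I + \tau_\D(1-\tau_\D))\bigr] = n^2(1+\tau_\I-\tau_\D)\bigl(\delta - \delta_{\mathsf{ID}}(\tau_\I,\tau_\D)\bigr) = n^2(1+\tau_\I-\tau_\D)\gamma,
\]
where the last step uses the definition of $\delta_{\mathsf{ID}}$ and $\gamma$. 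Taking the quotient cancels $n^2$ and yields exactly $\ell \leq \delta(\tau_\I+1)/\bigl(\gamma(\tau_\I+1-\tau_\D)\bigr)$.

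The main obstacle is essentially just bookkeeping: making sure that the factor $(\tau_\I+1-\tau_\D)$ appears cleanly in both the condition and the denominator, and that the case $\tau_\D < 1$ is used to guarantee this factor is positive so one may divide by it. A small but worth-mentioning point is that the parameters $\tau_\I n$ and $\tau_\D n$ are treated as non-negative integers in Theorem~\ref{thm:bound}; one may assume $\tau_\I n, \tau_\D n \in \mathbb{Z}_{\geq 0}$ or, more cleanly, apply the theorem with $t_\I = \lfloor \tau_\I n \rfloor$ and $t_\D = \lfloor \tau_\D n \rfloor$ and note that this only weakens the hypothesis slightly, leaving the asymptotic conclusion unchanged.
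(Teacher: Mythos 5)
Your proposal is correct and matches the paper's own argument: the paper likewise applies Theorem~\ref{thm:bound} with $t_\I = \tau_\I n$, $t_\D = \tau_\D n$, noting (via its earlier reformulation with $\rho = \tau_\D/\delta$) that condition~(\ref{eq:distlb}) is just condition~(\ref{eq:listcond3}) after substitution, and then simplifies~(\ref{eq:listsize}) to $\delta(\tau_\I+1)/\bigl(\gamma(\tau_\I+1-\tau_\D)\bigr)$ exactly as you do. Your remark about rounding $\tau_\I n$, $\tau_\D n$ to integers is a harmless extra precaution the paper omits.
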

\begin{proof}
  Let $\rho = \tau_\D/\delta$.
  Then, the inequality~(\ref{eq:distlb}) is equivalent to $\tau_\I < \tau_\I(\delta, \rho)$.
  Thus, by the discussion on the above, the code is $(\tau_\I n, \tau_\D n, \ell)$-list decodable for some $\ell$.
  It follows from~(\ref{eq:listsize}) that $\ell$ is bounded above by
  \begin{align*}
    \frac{\delta(\tau_\I+1)}{(\delta - \tau_\D)(1-\tau_\D) - (1 - \delta)\tau_\I}
    & = \frac{\delta(\tau_\I+1)}{\gamma \,(\tau_\I + 1-\tau_\D)}.
  \end{align*}
\end{proof}

It is known that, for any $\varepsilon > 0$, the normalized minimum distance $1 - \varepsilon$ is attainable by codes over an alphabet of size $\Omega( \varepsilon^3)$.
\begin{lemma}[Corollary~11 of~\cite{GW17} for $\theta = 1/3$]\label{lem:highdistcode}
  Let $\varepsilon \in (0,1/2)$. For every $n$, there exists a code $C \subseteq [q]^n$ of rate $R = \varepsilon/3$
  such that the minimum Levenshtein distance of $C$ is at least $2(1 - \varepsilon)n$, provided $q \geq 64/\varepsilon^3$.
  Moreover, $C$ can be constructed, encoded, and decoded in time  $q^{O(n)}$.
\end{lemma}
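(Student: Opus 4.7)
The plan is to prove the statement by a Gilbert--Varshamov style greedy construction over $[q]^n$. The first step is to translate the Levenshtein-distance condition into a condition on the longest common subsequence. Since any two words $x, y \in [q]^n$ satisfy $\ldis(x,y) = 2n - 2\,\mathrm{LCS}(x,y)$, demanding minimum Levenshtein distance at least $2(1-\varepsilon)n$ is equivalent to demanding that every pair of distinct codewords have common subsequences of length at most $\varepsilon n$. So I recast the problem: I need to construct a large subset of $[q]^n$ in which every pair of elements has short common subsequences.

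The second step is to upper bound the size of the ``bad'' ball: for a fixed $x \in [q]^n$, the number of $y \in [q]^n$ with $\mathrm{LCS}(x,y) > \varepsilon n$. Any such $y$ admits a description by (i) choosing a length-$\lceil \varepsilon n \rceil$ subsequence $s$ of $x$, (ii) choosing the $\lceil \varepsilon n \rceil$ positions inside $y$ where $s$ is embedded, and (iii) filling the remaining $\lfloor (1-\varepsilon) n \rfloor$ positions of $y$ with arbitrary symbols. This yields the crude but sufficient bound
\[
\bigl|\{y : \mathrm{LCS}(x,y) > \varepsilon n\}\bigr| \;\leq\; \binom{n}{\lceil \varepsilon n\rceil}^{2} \, q^{(1-\varepsilon)n}.
\]

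The third step is the greedy argument itself. Initialize $C = \emptyset$ and repeatedly add any word $x \in [q]^n$ whose bad ball does not meet $C$; continue until no such word remains. At termination every word in $[q]^n$ lies in the bad ball of some codeword, so $|C|$ is at least $q^n$ divided by the ball bound above. Using $\binom{n}{\varepsilon n} \leq (e/\varepsilon)^{\varepsilon n}$ and taking $\log_q$, the code has rate at least $\varepsilon - \tfrac{2\varepsilon \log(e/\varepsilon)}{\log q}$. To reach rate $\varepsilon/3$, it suffices that $\log q \geq 3\log(e/\varepsilon)$, i.e., $q \geq (e/\varepsilon)^3$; the explicit threshold $q \geq 64/\varepsilon^3$ absorbs this comfortably. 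I expect the main nuisance to be getting the constants clean enough to hit exactly $64/\varepsilon^3$ and $\varepsilon/3$ rather than slightly weaker parameters --- which is purely a matter of careful bookkeeping, not of mathematical difficulty.

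For the ``moreover'' part, the greedy procedure can be implemented by iterating through $[q]^n$ lexicographically; for each candidate $x$, check $\mathrm{LCS}(x,c) > \varepsilon n$ against every existing codeword $c$, each check running in time $\poly(n)$ via standard dynamic programming. Since $|C| \leq q^n$, the total construction time is $q^{O(n)}$. Encoding and decoding are then done by storing the resulting lookup table of size $q^{Rn} \leq q^{O(n)}$ and performing a table search, again in time $q^{O(n)}$.
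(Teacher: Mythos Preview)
The paper does not prove this lemma itself; it is quoted as Corollary~11 of~\cite{GW17} with $\theta=1/3$, so there is no proof in the present paper to compare your attempt against. Your Gilbert--Varshamov greedy argument is a correct self-contained proof: the identity $\ldis(x,y)=2n-2\,\mathrm{LCS}(x,y)$, the ball estimate $\binom{n}{\varepsilon n}^{2}q^{(1-\varepsilon)n}$, and the resulting rate bound $\varepsilon\bigl(1-2\log(e/\varepsilon)/\log q\bigr)$ are all sound, and the hypothesis $q\ge 64/\varepsilon^{3}>(e/\varepsilon)^{3}$ indeed forces rate at least $\varepsilon/3$ (up to the floor/ceiling bookkeeping you already flagged). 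The brute-force $q^{O(n)}$ construction, encoding, and decoding via table lookup are also fine.
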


Hence, given $\tau_\I \geq 0$ and $\tau_\D \in [0,1)$,
  we can obtain a $(\tau_\I n, \tau_\D n, \ell)$-list decodable code
  by choosing  $\delta = 1 - \varepsilon + \gamma$ for $\varepsilon = (1-\tau_\D)^2/(\tau_\I+1-\tau_\D)$ and small $\gamma > 0$.
%  Then, the rate is $\varepsilon/3$, and the alphabet size is $\Omega(\varepsilon^3) = \Omega()$

  \ignore{
\section{Bounds for Deletions}

\begin{lemma}\label{lem:superseq}
  %Let $C \subseteq \Sigma^n$ be a code of minimum Levenshtein distance $d$.
  Let $n$, $t_\I$, and $t_\D$ be  non-negative integers satisfying $t_\D \leq n$.
  Suppose that there are $\ell$ distinct words $c_1, \dots, c_\ell \in \Sigma^n$ such that every $c_i$ can be transformed to some $v \in \Sigma^N$
  by exactly $t_\I$ insertions and $t_\D$ deletion, where $N = n + t_\I - t_\D$.
  %and $v \in \Sigma^N$ of length $N$,
  %let $\mathcal{L} = \{c_1, \dots, c_\ell\}$ be a subset of $\lball(v, t_\D, t_\I) \cap C$.
  Then, there is $v' \in \Sigma^{N'}$ such that $N' = N + \ell t_\D = n + t_\I + (\ell-1)t_\D$
  and every $c_i$ is a subsequence of $v'$.
\end{lemma}
\begin{proof}
  As in the proof of Theorem~\ref{thm:bound},
  for each $c_i$, there are sets $\mathcal{D}^{(i)} \subseteq [n]$ and $\mathcal{E}^{(i)} \subseteq [N]$
  with $|\mathcal{D}^{(i)}| = t_\D$ and $|\mathcal{E}^{(i)}| = t_\I$
  such that the word obtained from $c_i$ by deleting symbols at positions $\mathcal{D}^{(i)}$ is
  equal to the word obtained from $v$ by deleting symbols at positions $\mathcal{E}^{(i)}$.
  %For a vector $v = (v_1, \dots, v_N) \in \Sigma^N$ and a set $S \subseteq [N]$,
  %define $\hat{v}_j = v_j$ if $j \in S$, and $\hat{v}_j = \varepsilon$ for $j \notin S$.
  For a vector $(v_1, \dots, v_N)$ of length $N$ and a set $S \subseteq [N]$ with $|S| = m$, we denote by $(v_j)_{j \in S}$ the sequence $(\hat{v}_1, \dots, \hat{v}_N)$
  of length $m$ where $\hat{v}_j = v_j$ if $j \in S$, and $\hat{v}_j = \varepsilon$ otherwise.
%  consisting of the symbols $\{v_i\}_{i \in S}$
%  ordered lexicographically in the indices in $S$.
%  $(v_{i_1}, \dots, v_{i_m})$ of length $m$ such that
  %  $i_a \in S$ for every $a \in [m]$ and $i_a < i_b$ for every distinct $a, b \in [m]$.
  For each $i \in [\ell]$, define $v^{(i)} \triangleq (v_j)_{j \in [N] \setminus \mathcal{E}^{(i)}}$.
  Since $c_i$ can be obtained from $v^{(i)}$ by inserting $t_\D$ symbols,
  there is a set of string $\{u_j^{(i)}\}_{j\in \{0\} \cup [N]}$ such that $c_i = (v_j \circ u_j^{(i)})_{j \in \{0\} \cup [N]}$,
  $\sum_{j \in \{0\} \cup [N]} |u_j^{(i)}| = t_\D$, and $u_j^{(i)} = \varepsilon$ for $j \in \mathcal{E}^{(i)}$, where $v_0 = \varepsilon$ and $x \circ y$ is
  the concatenation of $x$ and $y$.
  For each $i \in [\ell]$, fix a set of strings $\{u_j^{(i)}\}_{j\in \{0\} \cup [N]}$ as above.
  Then, define the string $v' = (v_j \circ u_j^{(1)} \circ \dots \circ u_j^{(\ell)})_{j \in \{0\} \cup [N]}$.
  The length of $v'$ is $|v| + \ell t_\D = N + \ell t_\D = N'$, and it is obvious that every $c_i$ is a subsequence of $v'$.
  Therefore, the statement follows.
  %we can represent $c_i$ as the sequence $(v_j \circ u_j^{(i)})_{j \in \{0\} \cup [N] \setminus \mathcal{E}^{(i)}}$
  %for some set of strings $\{u_j^{(i)}\}_j$, where $v_0 = \varepsilon$ and $\sum_j |u_j^{(i)}| = t_\D$.
\end{proof}

\begin{theorem}
Let $C \subseteq \Sigma^n$ be a code of minimum Levenshtein distance $d$.
  For non-negative integers $t_\I$, $t_\D$, and $N \in [n - t_\D, n + t_\I]$,
  let $\ell \triangleq \max_{v \in \Sigma^N}|\lball(v, t_\D, t_\I) \cap C|$ be the maximum list size for received words of length $N$.
  If
  \[ A,\]
  then
  \[B.\]
\end{theorem}
\begin{proof}
  Consider $v \in \Sigma^N$ and a set of codewords $\mathcal{L} = \lball(v, t_\D, t_\I) \cap C$ with $|\mathcal{L}|=\ell$.
  Without loss of generality, we assume that every codeword in $\mathcal{L}$ can be transformed to $v$
  by exactly $t_\I'$ insertions and $t_\D'$ deletions satisfying $t_\I' \leq t_\I$, $t_\D' \leq t_\D$, and $N = n + t_\I' - t_\D'$.
%  This is because if $c_1$ and $c_2$ can be transformed to $v$ by $t_\I'$ insertions and $t_\D'$ deletions, and $t_\I''$ insertions and $t_\D''$ deletions, respectively,
    % = \{ c_1, c_2, \dots, c_\ell\}$.
  %Suppose that there are $\ell' = \ell'(t_\I', t_\D')$ codewords in $\mathcal{L}$ that can be transformed into $v \in \Sigma^N$
  %by exactly $t_\I'$ insertions and $t_\D'$ deletions, where $t_\I' \leq t_\I$, $t_\D' \leq t_\D$, and $N = n + t_\I' - t_\D'$.
  It follows from Lemma~\ref{lem:superseq} that there is $v' \in \Sigma^{N'}$ of length $N + \ell t_\D' = n + t_\I' + (\ell-1)t_\D'$ such that $v'$ is
  a supersequence of all the $\ell$ codewords.
  Note that the length of $v'$ is bounded above by $|\Sigma|n$. Thus, we assume that
  \[
  N' \leq \min\{n + t_\I' + (\ell-1)t_\D', qn\},
  \]
  where $q = |\Sigma|$.
  If we assume that $n + t_\I' + (\ell-1)t_\D' \leq qn$, then $N' = n + t_\I' + (\ell-1)t_\D'$.
  The condition implies that
  if $t_\D' > 0$,
  \[ \ell \leq \frac{(q-1)n-t_\I'}{t_\D'}+1 = \frac{q-1 - \tau_\I}{\tau_\D}+1=\frac{q-1-(\tau_\I-\tau_\D)}{\tau_\D},\]
  where $\tau_\I = t_\I'/n$ and $\tau_\D = t_\D'/n$.
  Or, the condition implies that
  \[ q \geq 1 + \frac{t_\I'+(\ell-1)t_\D}{n} = 1 + \tau_\I + (\ell-1)\tau_\D. \]
  By applying Theorem~\ref{thm:bound}, we have that
  \[
  \left( d - 2(t_\I' + (\ell-1)t_\D') +  \frac{2}{n+t_\I' + (\ell-1)t_\D'}(t_\I' + (\ell-1)t_\D')^2\right) L \leq d,
  \]
  where $L$ is the maximum list size for received words of length $n + t_\I' + (\ell-1)t_\D'$ when at most $t_\I'$ insertions and at most $t_\D'$ deletions occur.
  %if
  %\begin{equation}
  %  (t_\I' + (\ell-1)t_\D')^2 - (n+t_\I'+(\ell-1)t_\D')(t_\I' + (\ell-1)t_\D' - d/2) > 0,
    %(N + \ell't_\D' - n)^2 > (+\ell't_\D')( N + \ell' t_\D' - n - d/2),
  %\end{equation}
  %then it holds that
  %\begin{equation}
  %  \ell \leq \frac{(n+t_\I'+(\ell-1)t_\D') \cdot (d/2)}{(t_\I' + (\ell-1)t_\D')^2 - (n+t_\I'+(\ell-1)t_\D')(t_\I' + (\ell-1)t_\D' - d/2)}.
    %\frac{(N+\ell't_\D') \cdot (d/2)}{(N + \ell't_\D' - n)^2 - (N+\ell't_\D')( N + \ell' t_\D' - n - d/2)}.
  %\end{equation}

  Let $T = t_\I' + (\ell-1)t_\D'=t_\D' \ell + (t_\I'-t_\D')$. Then,
  \[(d - 2 T + \frac{2 T^2}{n+T}) \ell \leq d.\]
  \[((d - 2T)(n+T) + 2T^2)\ell \leq d(n+T).\]
  \[(dn - (2n-d)T-2T^2+2T^2)\ell \leq d(n+T)\]
  \[ (dn-(2n-d)T) \ell \leq d(n+T) \]
  \[ dn \ell - (2n-d)(t_\D'\ell+t_\I'-t_\D')\ell \leq d(t_\D'\ell + n + t_\I'-t_\D') \]
  \[ dn\ell - (2n-d)t_\D' \ell^2 - (2n-d)(t_\I'-t_\D') \ell \leq dt_\D' \ell + d(n+t_\I'-t_\D') \]
  \[ (2n-d)t_\D' \ell^2 + \left\{ dt_\D' - dn + (2n-d)(t_\I'-t_\D')\right\} \ell + d(n+t_\I'-t_\D') \geq 0\]
  If $2n-d=0$, we have $(dt_\D'-dn)\ell + d(n+t_\I'-t_\D')\geq 0$, and thus $(n-t_\D')\ell \leq n+t_\I'-t_\D'$. 
  Thus, we have
  \[  \ell \leq \frac{n+t_\I'-t_\D'}{n-t_\D'} = 1+\frac{t_\I'}{n-t_\D'}.\]
  Next, we assume that $2n-d>0$.
  If $t_\D' = 0$, we have $(-dn+(2n-d)t_\I')\ell + d(n+t_\I') \geq 0$,
  and thus, if $dn - (2n-d)t_\I' > 0$, namely, $t_\I' < (dn)/(2n-d) = \frac{\delta}{1-\delta}n$,
  \[ \ell \leq \frac{d(n+t_\I')}{dn - (2n-d)t_\I'}=\frac{d(n+t_\I')}{d(n+t_\I') -2nt_\I'}=1+\frac{2nt_\I'}{d(n+t_\I')-2nt_\I'}=1+\frac{\tau_\I}{\delta(1+\tau_\I)-\tau_\I},\]
  where $\tau_\I = t_\I'/n$.
  If $t_\D > 0$, we have
  \[ \ell^2 - \left( \frac{d(n-t_\D')}{(2n-d)t_\D'} - \frac{t_\I'-t_\D'}{t_\D'}\right) \ell + \frac{d(n+t_\I'-t_\D')}{(2n-d)t_\D'} \geq 0.\]
  Let $B = \frac{d(n-t_\D')}{(2n-d)t_\D'} - \frac{t_\I'-t_\D'}{t_\D'}$, and $C = \frac{d(n+t_\I'-t_\D')}{(2n-d)t_\D'}$.
  \[ B = \frac{d(n-t_\D')-(2n-d)(t_\I'-t_\D')}{(2n-d)t_\D'}=\frac{d(n+t_\I'-2t_\D')-2n(t_\I'-t_\D')}{(2n-d)t_\D'}
  = \frac{\delta(1+\tau_\I-2\tau_\D)-(\tau_\I-\tau_\D)}{(1-\delta)\tau_\D}.\]
  Thus, $B > 0$ if and only if
  \[\delta=\frac{d}{2n} > \frac{t_\I'-t_\D'}{n+t_\I'-2t_\D'} = \frac{\tau_\I - \tau_\D}{1+\tau_\I-2\tau_\D}.\]
  Let $\tau_\D = \rho_\D \delta$ and $\tau_\I = \rho_\I \delta$.
  Then
  \[ \delta > \frac{\delta(\rho_\I - \rho_\D)}{1 + \delta(\rho_\I - 2 \rho_\D)}\]
  If $1+\delta(\rho_\I-2\rho_\D) \geq 0$,
  \[ 1 + \delta(\rho_\I - 2 \rho_\D) > \rho_\I - \rho_\D \]
  If $\rho_\I > 2\rho_\D$,
  \begin{align*}
    \delta & > \frac{\rho_\I - \rho_\D - 1}{\rho_\I - 2\rho_\D}\\
    & = 1 - \frac{1 - \rho_\D}{\rho_\I - 2\rho_\D}
  \end{align*}
  Also,
  \[ C = \frac{\delta(1+\tau_\I-\tau_\D)}{(1-\delta)\tau_\D}.\]
  We have $\ell^2 - B \ell + C \geq 0$. Hence, $(\ell - B/2)^2 \geq (B^2-4C)/4$.
  It holds that $\ell - B/2 \leq - \sqrt{B^2-4C}/2$ or $\ell - B/2 \geq \sqrt{B^2-4C}/2$.
  Namely,
  \[ \ell \leq \frac{B-\sqrt{B^2-4C}}{2}, \text{ or } \ell \geq \frac{B+\sqrt{B^2-4C}}{2}.\]
  \begin{align*}
    B^2 - 4C
    & = \frac{(\delta(1+\tau_\I-2\tau_\D)-(\tau_\I-\tau_\D))^2 }{(1-\delta)^2\tau_\D^2}-\frac{4(1-\delta)\tau_\D \delta(1+\tau_\I-\tau_\D)}{(1-\delta)^2\tau_\D^2}\\
    & = \frac{1}{(1-\delta)^2\tau_\D^2} \left\{ aaaaa\right\}
    \end{align*}
\end{proof}
} % endo of ignore 

\subsection{Comparison with the claimed bounds of~\cite{WZ18}}\label{sec:comparison}

For a code $C \subseteq \Sigma^n$ of minimum Levenshtein distance $d$,
it is claimed in~\cite[Corollary~1]{WZ18} that
for any integer $t$ and word $v \in \Sigma^N$, if $t < n+N - \sqrt{(n+N)(n+N-d)} \triangleq t_\mathsf{WZ}$,
then $\ell = |\lball(v,t,t) \cap C| \leq (n+N)d/(t^2-(2t-d)(n+N))$.
%for $t < d \leq n +N$ and any word $v \in \Sigma^N$,
%if $t < n' - \sqrt{n'(n'-d)}$, then $\ell = |\lball(v,t,t) \cap C| \leq dn'/(t^2-(2t-d)n')$,
%where $n' = (n+N)q/(q+1)$ and $q = |\Sigma|$.
Compared to Theorem~\ref{thm:bound}, this bound is restricted to the case that the upper bounds on the numbers of insertions and deletions are the same.
The case that $t_\I = t_\D$ in Theorem~\ref{thm:bound} yields Corollary~\ref{cor:same}.
The normalized bound $t_\mathsf{WZ}/n =  1+N/n - \sqrt{(1+N/n)(1+N/n-2\delta)}$ of~\cite{WZ18} is
better than $t_\mathsf{equal}/n =  1 -\sqrt{1-\delta}$ of Corporeally~\ref{cor:same}, where $\delta = d/2n$.
Note that Theorem~\ref{thm:bound} gives a Johnson-type bound without the restriction that $t_\I = t_\D$.
%and that Corollary~\ref{cor:same} can yield the Johnson bound in the Hamming metric as a special case.
%By the restriction that $t_\I = t_\D$, we need to consider the case that $t_\I = t_\D < d/2$.

In~\cite{WZ18}, Johnson-type bounds were also claimed when only deletions or only insertions occur.
As observed above, when only deletions occur, Theorem~\ref{thm:bound} yields a trivial bound of unique decoding.
It remains open to giving a Johnson-type bound when only deletions occur.
For the case that only insertions occur, while the claimed bound of~\cite[Corollary~3]{WZ18} has an upper bound on the fraction of insertions $\tau_\I$,
our bound of Corollary~\ref{cor:summary} can take any value $\tau_\I \geq 0$.

%Theorem~\ref{thm:bound} gives a bound without the restriction that $t_\I = t_\D$.
%Thus, Corollary~\ref{cor:summary} shows an asymmetric behavior of insertions and deletions in list decoding,
%which was not observed by the bound of~\cite{WZ18}.

\section{Efficient Encoding and Decoding Algorithms}\label{sec:algorithms}

In this section, we present efficient encoding and decoding algorithms  of list-decodable codes for any given radii $\tau_\I = t_\I/n \geq 0$ and $\tau_\D =t_\D/n \in [0,1)$.
  The construction is based on concatenated codes with outer Reed-Solomon code, which are also used in  work~\cite{GW17,GL16} for
  explicit codes for \emph{unique} decoding of insertions and deletions.
  Our observation is that a list-decoding algorithm of insertions and deletions can be constructed in a similar way as unique decoding of~\cite{GW17,GL16}.
  
%As in~\cite{GW17,GL16}, we employ concatenated codes for the construction.
The outer code is a Reed-Solomon code of length $n$ and rate $r$ over $\F_p$ with $p \geq n$.
Specifically, the code $C_\out : \F^{r n}_p \to \F^{n}_p$ is defined as the function that maps
message $s = (s_1, \dots, s_{r n}) \in \F_p^{r n}$ to the codeword $(f_s(\alpha_1), \dots, f_s(\alpha_{n}))$,
where $f_s(X) \triangleq s_1 + s_2 X + \dots + s_{r n} X^{r n-1}$ and $\alpha_1, \dots, \alpha_{n}$ are distinct elements of $\F_p$.
%In the following, we assume that $n = p$.
The inner code is %a $(t_\I, t_\D, \ell)$-list decodable
a code $C_\inin : \F_p \times \F_p \to [q]^m$ of rate $r'$. % and normalized minimum Levenshtein distance $1 - \varepsilon$.
% given in~Lemma~\ref{lem:highdistcode}, where $q = 64/\varepsilon^{3}$.
Since the rate of $C_\inin$ is $r'$, we have the relation $m = (\log_q p^2)/r' = 2\log_2 p/(r' \log_2 q)$. % \log_2 p/(\varepsilon \log_2(4/\varepsilon))$.
The resulting code $C_\conc: \F_p^{r n} \to [q]^{nm}$ is defined such that for message $s \in \F_p^{r n}$,
%which can be identified with a polynomial $f$ of degree less than $rn$ as above,
the final codeword is $C_\conc(s) \triangleq (C_\inin(\alpha_1, f_s(\alpha_1)), \dots, C_\inin(\alpha_{n}, f_s(\alpha_{n})))$.
The rate of $C_\conc$ is $(r r')/2$. %, and its normalized minimum Levenshtein distance is $(1-r)(1 - \varepsilon)$.

%Note that the normalized minimum  distance of $C_\inin$ is $1  - \varepsilon$.
We assume that $C_\inin$  is $(\tau_\I' m, \tau_\D' m, \ell')$-list decodable for
%$\tau_\I  = O(1)$, $\tau_\D =\rho (1 - \varepsilon)$ for $\rho \in [0,1)$, and
$\ell' = O(1)$, and the encoding can be done in time $T(m)$.
It follows from Corollary~\ref{cor:summary} that 
the constant list size can be achieved  as long as
the normalized minimum Levenshtein distance $\delta$ of the code satisfies $\delta \geq \delta_\mathsf{ID}(\tau_\I',\tau_\D') + \Omega(1)$.
%$\tau_\I = O(1)$ and 
%$\tau_\mathsf{L} - (\tau_\I + \tau_\D) = \Omega(1)$, where $\tau_\mathsf{L} = \tau_\mathsf{L}(1 - \varepsilon, \rho)$.
%$\tau_\mathsf{L} = t_\mathsf{L}(m,2(1-\varepsilon)m,\tau_\D m)/m$.
%and $\tau = \tau_\I + \tau_\D$.
Note that the encoding of the concatenated code can be done in time polynomial in $n$ when $T(m) = q^{O(m)}$ for $q = O(1)$ and $m = O(\log_2 n)$.

%for any $\tau > 0$, 
%we can choose sufficiently small $\varepsilon > 0$ and $\rho \in [0,1)$ such that
%$C_\inin$ is $(t_\I, t_\D, \ell)$-list decodable for $t_\D = \rho(1-\varepsilon)m$,
%$(t_\I + t_\D)/m = \tau < \tau_\mathsf{L}(\varepsilon,\rho)$,   and $\ell = \ell_\mathsf{L}(\varepsilon,\rho,\tau)$.

We show that %if $C_\inin$ is list-decodable for insertions and deletions,
$C_\conc$ is list-decodable for insertions and deletions in polynomial time.

%Note that the normalized minimum Levenshtein distance of $C_\inin$ is $1  - \varepsilon$,
%$C_\inin$ is $(\tau_\I m, \tau_\D m, \ell)$-list decodable for $\tau_\I + \tau_\D \leq \tau_\mathsf{L}(\varepsilon, \alpha)$
%$C_\inin$ is $(t_\I, t_\D, \ell)$-list decodable for $t_\D = \alpha(1-\varepsilon)m/2$ with $\alpha \in [0,1)$,
%  $(t_\I + t_\D)/m < \tau_\mathsf{L}(\varepsilon,\alpha)$,   and $\ell = \ell_\mathsf{L}(\varepsilon,\alpha,(t_\I+t_\D)/m)$.

\begin{lemma}\label{lem:explicit}
  For any $\tau_\I \geq 0$ and $\tau_\D \in [0,1)$,
%    let $\tau_\D' = \sqrt{\tau_\D}$, $\tau_\I' = 2\tau_\I(1-\tau_\D')^{-1}+(1-\tau_\D')/2$, and $r = {(1-\tau_\D')^4}/{(32(1+\tau_\I')^2\ell')}$
    the above code $C_\conc$ is $(\tau_\I nm, \tau_\D nm, \ell)$-list decodable
    by choosing $\tau_\D' = \sqrt{\tau_\D}$, $\tau_\I' = 2\tau_\I(1-\tau_\D')^{-1}+(1-\tau_\D')/2$, $r = {(1-\tau_\D')^4}/{(32(1+\tau_\I')^2\ell')}$,
    and $\ell = {16 (1+\tau_\I')^2 }{(1-\tau_\D')^{-3}}\cdot \ell'$.
    % for $\ell = {16 (1+\tau_\I')^2 }{(1-\tau_\D')^{-3}}\cdot \ell'$.
  %$\tau_\I' = (1 - \gamma_\I)\tau_\I$, $\tau_\D' = \gamma_\D \tau_\D$,
  %for $\tau_\D' = (1-\gamma)\tau_\D$, $\tau_\I' + \tau_\D'  < (1 - \gamma)\tau$,
%  any $\gamma_\I \in (0,1]$ and  $\gamma_\D \in [0,1)$ satisfying
 % $\gamma_\I/2 - \gamma_\D \geq (2 r \ell(\tau_\I^{-1}+1))^{1/4} \in (0,1)$, 
  %$\gamma_\D < \gamma_\I/2$,
  %$\ell' = \sqrt{8\ell(\tau_\I^{-1}+1)/(\gamma_\I^2 r)}$.
  %\ell_\mathsf{L}(\varepsilon + r - \varepsilon r, \gamma_\D\rho/(1-r), (1-\gamma_\I)\tau_\I + \gamma_\D\tau_\D)$.
%$\ell' = \ell_\mathsf{L}(\varepsilon + r - \varepsilon r, \gamma_\D \rho, (1-\gamma_\I)\tau)$.
    %\left( (32 r (1+\tau_\mathsf{L})) / (\varepsilon (\tau_\mathsf{L} - \tau)) \right)^{\frac{1}{4}} \in (0,1)$. 
  %$((1 - \gamma)\tau_\I nm, (1 - \gamma)\tau_\D nm, \ell')$-list decodable for $\gamma = 4r$ and $\ell' = *******$.
  The decoding can be done in time $O\left( n^3 m^2 \cdot T(m)\right)$. %$O((nm)^{O(1/\varepsilon)})$.
\end{lemma}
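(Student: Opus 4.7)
The plan is to adapt the two-phase concatenated-code decoder of~\cite{GW17,GL16}---originally designed for \emph{unique} decoding of insertions and deletions---to the list-decoding setting, using Reed--Solomon list recovery in place of unique decoding. Encoding is immediate: compute the outer evaluation $(f_s(\alpha_1),\dots,f_s(\alpha_n))$ and apply $C_\inin$ coordinatewise, giving overall rate $rr'/2$. The inner code $C_\inin$ will be instantiated via Lemma~\ref{lem:highdistcode} so as to have the normalized minimum Levenshtein distance $\delta_\mathsf{ID}(\tau_\I',\tau_\D') + \Omega(1)$ required by Corollary~\ref{cor:summary} for $(\tau_\I' m, \tau_\D' m, \ell')$-list decodability with $\ell' = O(1)$, over a constant-size alphabet and with inner encoding/decoding cost $T(m) = q^{O(m)}$.

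For a received word $y \in [q]^{n'}$ (with $n' \in [(1-\tau_\D)nm, (1+\tau_\I)nm]$), the decoder first performs a \emph{window scan}: for every start index $j \in \{0,\dots,n'\}$ and every length $l \in [(1-\tau_\D')m, (1+\tau_\I')m]$, it runs the inner list decoder on $y[j{+}1\,{:}\,j{+}l]$ to obtain a list $L_{j,l}$ of at most $\ell'$ pairs $(\alpha,\beta) \in \F_p \times \F_p$. The second phase feeds these lists into a weighted Reed--Solomon list-recovery routine, returning all message polynomials $f_s$ consistent with enough candidate points.

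The correctness rests on three steps. First I count \emph{good} blocks: fix a transmitted codeword $C_\conc(s^*)$ and let $a_i, b_i$ be the insertions and deletions suffered by its $i$-th inner block, so $\sum_i a_i \le \tau_\I nm$ and $\sum_i b_i \le \tau_\D nm$. Calling block $i$ good if $a_i \le \tau_\I' m$ and $b_i \le \tau_\D' m$, Markov's inequality gives at most $\tau_\I n/\tau_\I' + \tau_\D n/\tau_\D'$ bad blocks. Substituting $\tau_\D' = \sqrt{\tau_\D}$ yields $\tau_\D/\tau_\D' = 1 - (1-\tau_\D')$, and the choice $\tau_\I' = 2\tau_\I/(1-\tau_\D') + (1-\tau_\D')/2$ gives $\tau_\I/\tau_\I' < (1-\tau_\D')/2$ uniformly in $\tau_\I \ge 0$, so at least a $(1-\tau_\D')/2$ fraction of blocks are good. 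Second, for each good block the window aligned with its true start and length is examined in phase one, so $L_{j^{(i)},l^{(i)}}$ contains $(\alpha_i, f_{s^*}(\alpha_i))$. Third, with rate $r = (1-\tau_\D')^4/(32(1+\tau_\I')^2 \ell')$, the Johnson-type quantity $\sqrt{r\ell'} = (1-\tau_\D')^2/((1+\tau_\I')\sqrt{32})$ sits safely below the good-block fraction $(1-\tau_\D')/2$ (since $2\sqrt{2}(1+\tau_\I') > 1-\tau_\D'$), and a routine Johnson-type calculation for Reed--Solomon list recovery yields a list of the claimed size $\ell = 16(1+\tau_\I')^2(1-\tau_\D')^{-3}\ell'$.

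The hardest part will be the outer list-recovery analysis, since the candidates come from many overlapping and spurious windows, and a naive per-position list could grow with $m$, breaking the Johnson threshold. My plan here is to run the Guruswami--Sudan algebraic interpolation with weights calibrated so that only candidates traceable to a valid block partition of $y$ contribute significantly---the same device used in~\cite{GW17,GL16} for unique decoding---with the $(1+\tau_\I')^2$ and $(1-\tau_\D')^{-3}$ factors in $\ell$ corresponding precisely to the slack needed to absorb this overhead. The overall cost decomposes as $O(nm^2)$ window decodings of cost $T(m)$ each, plus a polynomial-time outer step, yielding the claimed $O(n^3 m^2 \cdot T(m))$ running time.
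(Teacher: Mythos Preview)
Your proposal has the right skeleton—good-block counting via Markov, inner list decoding on windows, outer Reed--Solomon recovery—and your arithmetic for the good-block fraction $(1-\tau_\D')/2$ is correct. But there is a genuine gap at exactly the point you flag as ``the hardest part,'' and the fix you sketch is not the one that makes the stated parameters work.

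You scan \emph{every} start index $j\in\{0,\dots,n'\}$ and every length $l\in[(1-\tau_\D')m,(1+\tau_\I')m]$, producing $\Theta(nm^2)$ windows and hence a candidate set $\mathcal{J}$ of size $\Theta(nm^2\ell')$. Against this, the Sudan/Guruswami--Sudan agreement threshold you need is only $(1-\tau_\D')n/2$, so the Johnson-type inequality fails by a factor of $m$ and the list size blows up. Your proposed remedy—weighted interpolation ``calibrated so that only candidates traceable to a valid block partition contribute''—is not actually specified, and it is unclear how any a priori weighting can suppress the $\Theta(m^2)$ redundant windows without also suppressing the one window aligned with each good block. The paper does \emph{not} use weighted interpolation at all; it avoids the problem by a much simpler device: it samples windows on a \emph{coarse grid} with spacing $\lfloor(1-\tau_\D')m/2\rfloor$, so that there are only $O\bigl(n(1+\tau_\I')^2(1-\tau_\D')^{-2}\bigr)$ windows total. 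This is precisely why the additive $(1-\tau_\D')/2$ appears in $\tau_\I'$: a good block (one with at most $2\tau_\I(1-\tau_\D')^{-1}m$ insertions and $\tau_\D' m$ deletions) is covered by \emph{some} grid window with at most $(1-\tau_\D')m/2$ extra symbols of slack, and that slack is absorbed into the inner insertion radius. With $|\mathcal{J}|\le 4n\ell'(1+\tau_\I')^2(1-\tau_\D')^{-2}$, plain Sudan list decoding (agreement $\ge\sqrt{2(rn)|\mathcal{J}|}$) succeeds for exactly the stated $r$, and the list size bound $\sqrt{2|\mathcal{J}|/(rn)}$ gives the stated $\ell$. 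You use the $(1-\tau_\D')/2$ term only to bound $\tau_\I/\tau_\I'$, which wastes its real purpose.

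A secondary point: your running-time accounting does not add up. You claim $O(nm^2)$ inner decodings at cost $T(m)$ each plus a polynomial outer step yields $O(n^3m^2\,T(m))$, but that product is only $O(nm^2\,T(m))$. In the paper the $n^3$ arises because the inner ``list decoder'' is a brute force over all $p^2=n^2$ pairs $(\alpha,\beta)\in\F_p^2$, each checked by encoding (cost $T(m)$) and computing Levenshtein distance (cost $O(m^2)$), over the $O(n)$ coarse-grid windows.
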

\begin{proof}
%  We choose parameters as $\tau_\D' = \sqrt{\tau_\D}$, $\tau_\I' = 2\tau_\I(1-\tau_\D')^{-1}+(1-\tau_\D')/2$, and $r = {(1-\tau_\D')^4}/{(32(1+\tau_\I')^2\ell')}$.
  First, we specify the decoding algorithm,
  where the input is a string $v$ that is obtained by changing a codeword $c$ under
  at most $\tau_\I nm$ insertions and at most $\tau_\D nm$ deletions.
  The idea is to list-decode the inner codewords by testing sufficiently many ``windows'' and apply the list-recover algorithm of the outer Reed-Solomon code.
  %The same approach was used in~\cite{GW17,GL16}.
  %at most $(1 - \gamma)\tau_\D nm$ deletions.
  %Let $\tau = (t_\I'+t_\D')/(nm)$.
  On input $v \in [q]^{*}$ of length from $(1-\tau_\D)nm$ to $(1+\tau_\I)nm$,
  \begin{enumerate}
  \item Set $\mathcal{J} \gets \emptyset$.
  \item For each $0 \leq j \leq \left\lceil \frac{2(1+\tau_\I)n}{1-\tau_\D'} \right\rceil$ and
    $1 \leq j' \leq \left\lceil \frac{2(1+\tau_\I')}{1-\tau_\D'} \right\rceil$,
    do the following:\label{step:2}
    \begin{enumerate}
    \item Let $v[j,j']$ denote the substring of $v$ indices from
      $\left\lfloor \frac{(1-\tau_\D')m}{2}\right\rfloor j+1$ to $\left\lfloor \frac{(1-\tau_\D') m}{2}\right\rfloor (j+j')$; %, where $\tau = \tau_\I + \tau_\D$;
    \item For every pair $(\alpha, \beta) \in \F_p \times \F_p$,
      if $v[j,j']$ can be obtained from $C_\inin(\alpha,\beta)$ by at most $\tau_\I' m$ insertions and at most $\tau_\D' m$ deletions,
      %if $\ldis(C_\inin(\alpha,\beta), v[j,j']) \leq \tau m$,
      add $(\alpha, \beta)$ to $\mathcal{J}$.\label{step:2-b}
      %If there is exactly one such pair, add it to $\mathcal{J}$.
      \end{enumerate}
  \item Find the list $\mathcal{L}$ that contains all polynomials $f$ of degree less than $rn$ for which
    $\left|\{ (\alpha, f(\alpha)) : \alpha \in \F_p\} \cap \mathcal{J}\right| \geq \frac{1-{\tau_\D'}}{2}n$.\label{step:3}
  \item Output the list of messages $s \in \F_p^{rn}$ such that
    $f_s \in \mathcal{L}$ and $v$ can be obtained from $C_\conc(s)$ by at most $\tau_\I nm$ insertions and at most $\tau_\D nm$ deletions.
    %$\ldis( C_\conc(s), v) \leq (\tau_\I + \tau_\D) nm$.
    %$\mathcal{L}' = \{ f_u \in \mathcal{L} : \ldis( C_\conc(f), v) \geq (1 - \gamma)\tau nm\}$.
  \end{enumerate}

  Next, we discuss the correctness of the decoding.
  For the codeword $c = C_\conc(s)$ for some $s \in \F_p^{r n}$, parse $c$ into $n$ blocks $(c_1, \dots, c_{n})$,
  where $c_i = C_\inin(\alpha_i, f_s(\alpha_i))$.
  Consider that the received word $v$ is parsed into $n$ blocks
  such that the $i$-th block $v_i$ is obtained from $c_i$.
  We say index $i \in [n]$ is \emph{good} if $v_i$ is obtained from $c_i$
  by at most $2\tau_\I(1-\tau_\D')^{-1} m$ insertions and at most $\tau_\D' m$ deletions, and $\emph{bad}$ otherwise.
  Since the number of deletions occurred in generating $v$ from $c$ is at most $\tau_\D nm$,
  the number of indices $i$ for which more than $\tau_\D' m$ deletions occurred in generating $v_i$ from $c_i$ is at most
  \begin{equation*}
    \frac{\tau_\D nm}{\tau_\D'm}  = {\tau_\D'} n.
  \end{equation*}
  Hence, $(1-{\tau_\D'})$-fraction of $v_i$'s are obtained from $c_i$ by at most $\tau_\D'm$ deletions.
  %  Thus, for at least $(1-\sqrt{\tau_\D})n$ indices $i$, $v_i$ is obtained by at most $\tau_\D'$ deletions.
  Similarly, the number of $v_i$'s for which more than $2\tau_\I(1-\tau_\D')^{-1} m$ insertions occurred is at most
  \[ \frac{\tau_\I nm}{2\tau_\I(1-\tau_\D')^{-1} m} = \frac{1-{\tau_\D'}}{2}n.\]
  Thus, the number of good indices are at least $(1-\tau_\D')n - \frac{1-{\tau_\D'}}{2}n = \frac{1-{\tau_\D'}}{2}n$. %${(1-\sqrt{\tau_\D})n}/{2}$.

  %Since $v$ is obtained from $c$ by at most $t_\I + t_\D \leq (1 - \gamma)\tau nm$ insertions and deletions,
  %the number of bad indices is at most
  %\begin{equation*}
  %  \frac{(1-\gamma)\tau nm}{(1 - \frac{\gamma}{2})\tau_\I m + \tau_\D m} \leq \frac{(1-\gamma)\tau nm}{(1 - \frac{\gamma}{2})\tau m} < \left( 1 - \frac{\gamma}{2}\right) n.
  %\end{equation*}
  %Thus, there are at least $\frac{\gamma n}{2}$ good indices.
  For any good index $i$, there exists $v[j,j']$ such that $0 < |v[j,j']| - |v_i| < \frac{(1-\tau_\D')m}{2}$ and
  $v_i$ is a substring of $v[j,j']$.
  Then, $v[j,j']$ can be obtained from $c_i$ by at most $2\tau_\I(1-\tau_\D')^{-1} m + \frac{(1-\tau_\D')m}{2} = \tau_\I' m$ insertions and
  at most $\tau_\D' m$ deletions.
  %we have that
  %\begin{align*}
  %  \ldis( c_i, v[j,j'] ) & \leq \ldis (c_i, v_i ) + \ldis( v_i , v[j,j'])\\
  %  & \leq \left( 1 - \frac{\gamma}{2}\right)\tau m + \frac{\gamma \tau m}{2} = \tau m.
  %\end{align*}
  Since $c_i = C_\inin(\alpha_i, f_s(\alpha_i))$, the pair $(\alpha_i, f_s(\alpha_i))$ is added to $\mathcal{J}$
  in Step~\ref{step:2}.
  Namely, for correct message $s$ and its good index $i$, the pair $(\alpha_i, f_s(\alpha_i))$ will be included in $\mathcal{J}$.
  Since the number of good indices is at least $\frac{1-{\tau_\D'}}{2}n$, the polynomial $f_s$ will appear in $\mathcal{L}$ by Step~\ref{step:3}.
  
  For Step~\ref{step:3}, we employ Sudan's list decoding algorithm~\cite{Sud97}.
  The algorithm, on input a set $\mathcal{J} \subseteq \F_p \times \F_p$,
  outputs a list $\mathcal{L}$ of all polynomials $f$ of degree less than $r n$ such that
  $(\alpha, f(\alpha)) \in \mathcal{J}$ for more than $\sqrt{2(r n)|\mathcal{J}|}$ values of $\alpha \in \F_p$.
  %on input $\nu$ distinct pairs $(a_i, b_i) \in \F_p \times \F_p$,
  %outputs a list $P$ of all polynomials $f$ of degree less than $pn$ that satisfy
  %$f(a_i) = b_i$ for at least $\mu$ values of $i \in [\nu]$.
  It runs in time polynomial in  $|\mathcal{J}|$ and $\log p$, and the size of $\mathcal{L}$ is at most
  $\sqrt{2|\mathcal{J}|/(r n)}$.
  For each $v[j,j']$, the number of codewords in $C_\inin$ that can be obtained from $v[j,j']$ by at most $\tau_\I' m$ insertions and $\tau_\D' m$ deletions is
  at most $\ell'$.
  %satisfying $\ldis(C_\inin(\alpha,\beta),v[j,j']) \leq \tau m$ is at most $\ell$.
  %at most $\ell_\mathsf{L}(\varepsilon, \rho, \tau) \leq (1+\tau_\mathsf{L})/(\varepsilon (\tau_\mathsf{L} - \tau))$.
  Hence, the size of $\mathcal{J}$ is bounded by
  \begin{align*}
    | \mathcal{J} | & \leq \left\lceil  \frac{2(1+\tau_\I)n}{1-\tau_\D'} \right\rceil \cdot \left\lceil \frac{2(1+\tau_\I')}{1-\tau_\D'}\right\rceil
    \cdot \ell' \leq \frac{4n\ell'(1+\tau_\I')^2}{(1-\tau_\D')^2}.
    %\leq \frac{4n(1+\tau_\mathcal{L})}{\gamma^2 \varepsilon (\tau_\mathsf{L} - \tau)}.
  \end{align*}
  %For our choice of $\gamma_\I$ and $\gamma_\D$,
  The agreement parameter in Step~\ref{step:3} satisfies
  \begin{align*}
    \frac{1-{\tau_\D'}}{2}n =
    \sqrt{ 2 \cdot \frac{(1-\tau_\D')^4}{32(1+\tau_\I')^2\ell'} n \cdot \frac{4n\ell'(1+\tau_\I')^2}{(1-\tau_\D')^2}}
    \geq \sqrt{2(r n)|\mathcal{J}|}. 
%    \left(\frac{\gamma_\I}{2} - \gamma_\D\right)n
%    \geq \sqrt{\frac{n^2}{\left( \frac{\gamma_\I}{2} - \gamma_\D\right)^2}\cdot \left( \frac{\gamma_\I}{2} - \gamma_\D\right)^4}
%    \geq \sqrt{ \frac{4n^2}{\gamma_\I^2} \cdot (2r\ell(\tau_\I^{-1}+1))}
%    = \sqrt{ \frac{n^2}{4\gamma^2} \cdot \frac{32r(1+\tau_\mathsf{L})}{\varepsilon(\tau_\mathsf{L}-\tau)}}
%    = \sqrt{\frac{2(rn)4n(1+\tau_\mathsf{L})}{\gamma^2 \varepsilon(\tau_\mathsf{L}-\tau)}}
%    \geq \sqrt{2(r n)|\mathcal{J}|}.
  \end{align*}
  Thus, the algorithm can correctly output the list $\mathcal{L}$ in Step~\ref{step:3}.
  The size of $\mathcal{L}$ is at most $\sqrt{2|\mathcal{J}|/(rn)} \leq {16 (1+\tau_\I')^2 }{(1-\tau_\D')^{-3}}\cdot \ell' = \ell$.
  %\sqrt{8\ell(\tau_\I^{-1}+1)/(\gamma_\I^2 r)} = \ell'$.

  %Note that the minimum distance of $C_\conc$ is $2(1-r)(1 - \varepsilon)nm$,
  %the list decoding radius for insertions is $\tau_\I' nm = (1-\gamma_\I)\tau_\I nm$, and that for deletions  is
  %$\tau_\D' nm = \gamma_\D\tau_\D nm = \gamma_\D \rho(1-\varepsilon)nm$.
  %Thus, it follows from (\ref{eq:listsize}) that the list size is bounded above by
  %\begin{align*}
  %  &\ell_\mathsf{L}\left( 1 - (1-r)(1-\varepsilon), \frac{\gamma_\D\rho(1-\varepsilon) nm}{(1-r)(1 - \varepsilon)nm},  (1-\gamma_\I)\tau_\I + \gamma_\D\rho(1-\varepsilon) \right)\\
  %  &= \ell_\mathsf{L}\left( \varepsilon+r- \varepsilon r, \frac{\gamma_\D \rho}{1-r}, (1-\gamma_\I)\tau_\I + \gamma_\D\rho(1-\varepsilon)\right)\\
  %  & = 
  %\end{align*}
 % $\ell_\mathsf{L}(\varepsilon + r - \varepsilon r, \gamma_\D\rho/(1-r), (1-\gamma_\I)\tau_\I + \gamma_\D\tau_\D)$.

  Finally, we evaluate the running time of decoding.
  In Step~\ref{step:2}, each $v[j,j']$ is of length at most $(1+\tau_\I') m$, and we check $O(n)$ combinations of $(j,j')$.
  The encoding time of $C_\inin(\alpha,\beta)$ is $T(m)$,
  and the Levenshtein distance can be computed in time $O( m^2)$.
  Since we search for all $(\alpha,\beta) \in \F_p \times \F_p$ for each $v[j,j']$,
  Step~\ref{step:2} takes time $O(n) \cdot p^2 \cdot T(m) \cdot O(m^2) = O( p^3m^2) \cdot T(m)$.
  It is known that when $|\mathcal{J}| = O(n)$ and $\ell = O(1)$,
  Sudan's list decoding can be performed in time $O(n^2)$~\cite{RR00}, and hence Step~\ref{step:3} takes time $O(n^2)$.
  Since the list size is at most $\ell = O(1)$, the final step can be performed in time $O((nm)^2)$.
  By setting $p = n$, the overall running time of decoding is $O(n^3m^2\cdot T(m)) + O(n^2)+ O((nm)^2) = O(n^3 m^2 \cdot T(m) )$. %= O((nm)^{O(1/\varepsilon)})$.  
\end{proof}

We have the following theorem.
%The following corollary immediately follows.
\begin{theorem}\label{thm:constalphabetcode}
  For any constants $\tau_\I \geq 0$ and $\tau_\D \in [0,1)$,
    there exists a code $C \subseteq [q]^N$ of rate $\Omega(1)$ for $q = O(1)$
    that is $(\tau_\I N, \tau_\D N, O(1))$-list decodable.  The code can be decoded in time $N^{O(1)}$.
\end{theorem}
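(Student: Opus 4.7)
The plan is to instantiate Lemma~\ref{lem:explicit} by supplying an inner code with the appropriate list-decodability and then verify that all parameters come out as claimed. The outer Reed-Solomon code already has constant rate $r = \Omega(1)$ in Lemma~\ref{lem:explicit}, so the main job is to build a constant-rate, constant-alphabet inner code $C_\inin$ that is $(\tau_\I' m, \tau_\D' m, \ell')$-list decodable for the specific $\tau_\I'=2\tau_\I(1-\tau_\D')^{-1}+(1-\tau_\D')/2$ and $\tau_\D'=\sqrt{\tau_\D}$ dictated by the lemma, with $\ell'=O(1)$ and $m = O(\log N)$ so that brute-force encoding time $T(m)=q^{O(m)}$ remains polynomial in $N$.

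To obtain such an inner code, I would apply Corollary~\ref{cor:summary} with decoding radii $\tau_\I'$ and $\tau_\D'$: whenever the normalized minimum Levenshtein distance $\delta$ of the inner code exceeds $\delta_\mathsf{ID}(\tau_\I',\tau_\D') = 1 - (1-\tau_\D')^2/(\tau_\I'+1-\tau_\D')$ by a positive constant $\gamma$, the code is $(\tau_\I' m,\tau_\D' m,\ell')$-list decodable for a constant $\ell'$ depending only on $\tau_\I,\tau_\D$. Since $\tau_\D < 1$ implies $\tau_\D' < 1$, we have $\delta_\mathsf{ID}(\tau_\I',\tau_\D') < 1$, so there is a constant $\varepsilon > 0$ such that any code with $\delta \geq 1-\varepsilon$ satisfies the hypothesis of Corollary~\ref{cor:summary}. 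Lemma~\ref{lem:highdistcode} then supplies, for every block length $m$, a code $C_\inin \subseteq [q]^m$ of rate $\varepsilon/3$ and minimum Levenshtein distance at least $2(1-\varepsilon)m$ with constant alphabet size $q \geq 64/\varepsilon^3$, constructible, encodable, and (in our use) brute-force-listable in time $q^{O(m)}$.

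Next I would set the block length $m = \Theta(\log N)$ so that $T(m) = q^{O(m)} = N^{O(1)}$, and set the outer field size $p = n$ as in Lemma~\ref{lem:explicit}, which is legal since Reed-Solomon over $\F_p$ with $p \geq n$ is available. The concatenated code $C_\conc$ then has alphabet $[q]$ with $q = O(1)$, block length $N = nm$, rate $rr'/2 = \Theta(\varepsilon\cdot r) = \Omega(1)$ (all of $r,r',\varepsilon$ being positive constants depending only on $\tau_\I,\tau_\D$), and by Lemma~\ref{lem:explicit} is $(\tau_\I N,\tau_\D N,\ell)$-list decodable for the stated constant $\ell$ in time $O(n^3 m^2 \cdot T(m)) = N^{O(1)}$.

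The only real subtlety is a feasibility check on the parameter chain: we must verify that the constants $\tau_\I',\tau_\D',r,\ell'$ produced from $\tau_\I,\tau_\D$ through Lemma~\ref{lem:explicit}, Corollary~\ref{cor:summary}, and Lemma~\ref{lem:highdistcode} are all well-defined and positive, and that $\tau_\D' = \sqrt{\tau_\D} < 1$ makes $\delta_\mathsf{ID}(\tau_\I',\tau_\D') < 1$ so that the required $\varepsilon > 0$ exists. This is a routine but essential algebraic check and is the one place where the argument could break if any of the intermediate bounds were tight; since everything enters strictly, a suitably small $\gamma$ absorbs all slack and the composition goes through.
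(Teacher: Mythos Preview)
Your proposal is correct and follows essentially the same approach as the paper: build the inner code via Lemma~\ref{lem:highdistcode} with $\varepsilon$ small enough that Corollary~\ref{cor:summary} guarantees $(\tau_\I' m,\tau_\D' m,O(1))$-list decodability, then plug into Lemma~\ref{lem:explicit} and observe that $m=O(\log n)$ keeps $T(m)=q^{O(m)}$ polynomial in $N$. One minor imprecision: you do not get to ``set'' $m=\Theta(\log N)$ independently, since $m=2\log_2 p/(r'\log_2 q)$ is forced by the choices $p=n$ and $r'=\varepsilon/3$; but this yields $m=O(\log n)=O(\log N)$ anyway, so the conclusion stands.
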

\begin{proof}
  As discussed in Section~\ref{sec:discussion}, for any $\tau_\I$ and $\tau_\D$,
  a $(\tau_\I'm, \tau_\D'm,O(1))$-list decodable code $C_\inin$ described above
  can be constructed by choosing sufficiently small constant $\varepsilon$ in Lemma~\ref{lem:highdistcode}.
  Then, the rate of $C_\conc$ is $(rr')/2 = \Omega(1)$, and $q = O(1)$.
  Since $C_\inin$ can be encoded in time $q^{O(m)}$ by Lemma~\ref{lem:highdistcode},
  the decoding of $C_\conc$ can be done in time $O(n^3 m^2) \cdot  q^{O(m)} = O(n^3 m^2) \cdot n^{O(1/\varepsilon)} = N^{O(1)}$. 
\end{proof}

Although the constant hidden in $O(1)$ of the decoding complexity may be large for small $\varepsilon$,
the time complexity can be reduced by employing the code of Theorem~\ref{thm:constalphabetcode} as the inner code.

\subsection*{Insertion-Only Case}

If only insertions occur, the resulting concatenated code can achieve almost the same list-decoding radius as that of the inner code.
This observation is crucial for constructing list-decodable codes over small alphabets.

%the list decodability of concatenated codes can be improved especially for small alphabet sizes.
The construction is the same as the previous one.
We assume that the inner code $C_\inin$ over $[q]$ of length $m$ is $(\tau_\I', 0, \ell')$-list decodable
for $q = O(1)$ and $\ell'=O(1)$,
and can be encoded in time $T(m)$.
\begin{lemma}\label{lem:insertcode}
  For any $\tau_\I \geq 0$, $\gamma \in (0,1)$, and integer $k \geq 2$,
  the code $C_\conc$ is $(\tau_\I nm, 0, \ell)$-list decodable by choosing
  $\tau_\I' = (1+\gamma)\left( \tau_\I + \left\lceil (1+\tau_\I)/k \right\rceil \right)$, $r= \gamma^2/(8k^2\ell')$, and
  $\ell = 4k^2\ell'/\gamma$.
  The decoding can be done in time $O(n^3m^2 \cdot T(m))$.
\end{lemma}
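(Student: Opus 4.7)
The plan is to mirror the proof of Lemma~\ref{lem:explicit} but with the windowing refined for the insertion-only case, which avoids the overhead incurred there by deletions. The decoder, on input $v$ of length between $nm$ and $(1+\tau_\I)nm$, enumerates windows $v[j,j']$ whose starting positions advance in steps of roughly $\lceil(1+\tau_\I)m/(2k)\rceil$ and whose lengths run from $m$ up to $(1+\tau_\I)m$ in increments of the same quantity, giving $O(nk)$ starting positions and $O(k)$ length choices. For each such window and each $(\alpha,\beta) \in \F_p \times \F_p$, the decoder adds $(\alpha,\beta)$ to a set $\mathcal{J}$ whenever $v[j,j']$ can be obtained from $C_\inin(\alpha,\beta)$ by at most $\tau_\I' m$ insertions. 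Sudan's list-recovery is then applied to $\mathcal{J}$ to produce a list $\mathcal{L}$ of candidate polynomials of degree less than $rn$, and each $f \in \mathcal{L}$ is verified against $v$ in the insertion metric to yield the final list of messages.

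For correctness, parse the codeword $c = C_\conc(s)$ as $(c_1, \dots, c_n)$ with $c_i = C_\inin(\alpha_i, f_s(\alpha_i))$, and let $v_i$ be the contiguous substring of $v$ originating from $c_i$, so that $v_i$ is obtained from $c_i$ by exactly $t^{(i)}_\I$ insertions with $\sum_i t^{(i)}_\I \leq \tau_\I nm$. Call index $i$ \emph{good} if $t^{(i)}_\I \leq (1+\gamma)\tau_\I m$; a Markov-type argument then bounds the number of bad indices by $n/(1+\gamma)$, leaving at least $\gamma n/(1+\gamma)$ good ones. For any good $i$ one can locate $(j,j')$ such that $v[j,j']$ contains $v_i$ as a contiguous substring with combined left and right slack at most $\lceil(1+\tau_\I)m/k\rceil$; consequently $v[j,j']$ is reachable from $c_i$ by at most $t^{(i)}_\I + \lceil(1+\tau_\I)m/k\rceil \leq (1+\gamma)\tau_\I m + (1+\gamma)\lceil(1+\tau_\I)m/k\rceil = \tau_\I' m$ insertions, so $(\alpha_i, f_s(\alpha_i))$ enters $\mathcal{J}$.

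Next I would check Sudan's parameters. Each window contributes at most $\ell'$ pairs by the $(\tau_\I',0,\ell')$-list decodability of $C_\inin$, hence $|\mathcal{J}| = O(nk^2\ell')$. With $r = \gamma^2/(8k^2\ell')$, the agreement threshold $\sqrt{2(rn)|\mathcal{J}|}$ evaluates to at most $\gamma n/(1+\gamma)$ once the constants are tracked, which is matched by the good-index count, so the true polynomial $f_s$ appears in $\mathcal{L}$. Simultaneously, the list size $\sqrt{2|\mathcal{J}|/(rn)}$ returned by Sudan is bounded by $4k^2\ell'/\gamma = \ell$, as claimed; the final verification only shrinks this list.

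The running time follows the accounting of Lemma~\ref{lem:explicit}: the window scan costs $O(nk^2) \cdot p^2 \cdot T(m) \cdot O(m^2)$, which for $p = n$ and constant $k, \gamma, \tau_\I$ is $O(n^3 m^2 \cdot T(m))$; Sudan's list decoding adds $O(n^2)$ and the final verification adds $O((nm)^2)$. The main obstacle is calibrating the window step and length-step sizes so that the window misalignment slack fits inside the $(1+\gamma)\lceil(1+\tau_\I)/k\rceil m$ margin built into $\tau_\I'$ while the good-index fraction $\gamma/(1+\gamma)$ simultaneously clears Sudan's agreement threshold; matching those two constraints is precisely what forces the constants $1/(8k^2)$ in $r$ and $4k^2/\gamma$ in $\ell$.
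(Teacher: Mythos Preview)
Your plan follows the paper's proof closely: window-scan the received word, inner list-decode each window against $C_\inin$, collect the resulting pairs into $\mathcal{J}$, and then run Sudan's algorithm on the outer Reed--Solomon code. The structure, the running-time accounting, and the final list-size bound $\sqrt{2|\mathcal{J}|/(rn)}$ are all the same.

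The one place where your calibration diverges from the paper and does not close is the good-index threshold. You declare $i$ good when $t_\I^{(i)}\le(1+\gamma)\tau_\I m$, which by Markov yields only $\gamma n/(1+\gamma)$ good indices. But with $r=\gamma^2/(8k^2\ell')$ and $|\mathcal{J}|\le nk^2\ell'$ the Sudan agreement threshold is $\sqrt{2rn|\mathcal{J}|}=\gamma n/2$, and $\gamma n/(1+\gamma)<\gamma n/2$ for every $\gamma\in(0,1)$; your halved window step only enlarges $|\mathcal{J}|$ and pushes the threshold further up. So the claim that the threshold ``evaluates to at most $\gamma n/(1+\gamma)$ once the constants are tracked'' is false with these parameters. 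The paper instead takes the good threshold to be the full $\tau_\I' m$, and it is precisely the additive $\lceil(1+\tau_\I)/k\rceil$ term in $\tau_\I'$ that drives the bad-index count below $(1-\gamma/2)n$, giving $\ge\gamma n/2$ good indices to match the Sudan threshold. If you adopt that threshold and keep the window step at $b=\lceil(1+\tau_\I)m/k\rceil$ (not $b/2$), the constants line up; with your choices they do not.
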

\begin{proof}
  The decoding algorithm is changed as follows.
  On input $v \in [q]^*$ of length from $nm$ to $(1+\tau_\I)nm$,
  \begin{enumerate}
  \item Set $\mathcal{J} \gets \emptyset$, and let $b = \left\lceil \frac{(1+\tau_\I)m}{k} \right\rceil$.
  \item \label{step:binary2}For each $0 \leq j \leq \left\lfloor \frac{(1+\tau_\I)nm}{b} \right\rfloor$ and
    $1 \leq j' \leq k$,
    do the following:
    \begin{enumerate}
    \item Let $v[j,j']$ denote the substring of $v$ indices from
      $b  j+1$ to $b (j+j')$; %, where $\tau = \tau_\I + \tau_\D$;
    \item For every pair $(\alpha, \beta) \in \F_p \times \F_p$,
      if $v[j,j']$ can be obtained from $C_\inin(\alpha,\beta)$ by at most $\tau_\I' m$ insertions,
      add $(\alpha, \beta)$ to $\mathcal{J}$.
      \end{enumerate}
  \item \label{step:binary3}Find the list $\mathcal{L}$ that contains all polynomials $f$ of degree less than $rn$ for which
    $\left|\{ (\alpha, f(\alpha)) : \alpha \in \F_p\} \cap \mathcal{J}\right| \geq \frac{\gamma}{2}n$.
  \item Output the list of messages $s \in \F_p^{rn}$ such that
    $f_s \in \mathcal{L}$ and $v$ can be obtained from $C_\conc(s)$ by at most $\tau_\I nm$ insertions.
  \end{enumerate}

  As in the proof of Lemma~\ref{lem:explicit},
  we assume that received word $v$ can be parsed into $(v_1, \dots, v_n)$ such that
  $v_i$ is obtained by the $i$-th block of the original codeword $c  = (c_1, \dots, c_n)$, where $c_i = C_\inin(\alpha_i,f_s(\alpha_i))$ for some $s \in \F_p^{rn}$.
  We say index $i \in [n]$ is \emph{good} if $v_i$ is obtained from $c_i$ by at most $\tau_\I' m$ insertions, and \emph{bad} otherwise.
  Then, the number of indices $i$ for which more than $\tau_\I' m$ insertions occurred is at most
  \begin{equation*}
    \frac{\tau_\I nm}{\tau_\I'm}  \leq \frac{n}{(1+\gamma)(1+1/k)} < \frac{n}{1+\gamma} = \left(1 - \frac{1}{\gamma^{-1} + 1}\right)n< \left(1-\frac{\gamma}{2}\right)n
  \end{equation*}
  for $\gamma \in (0,1)$.
  Thus, a $\gamma/2$-fraction of $v_i$'s are obtained from $c_i$ by at most $\tau_\I' m$ insertions, and their indices are good.

  For any good index $i$, there exists $v[j,j']$ such that $0 < \left| v[j,j']\right| - |v_i|  < b$ and
  $v_i$ is a substring of $v[j,j']$.
  Then, $v[j,j']$ can be obtained from $c_i$ by at most $\tau_\I m + b \leq \tau_\I' m$ insertions.
  %Since $c_i = C_\inin(\alpha_i, f_s(\alpha_i))$, the pair
  %Hence, $(\alpha_i, f_s(\alpha_i))$ is added to $\mathcal{J}$  in Step~\ref{step:2}.
  Hence, for correct message $s$ and its good index $i$, the pair $(\alpha_i, f_s(\alpha_i))$ will be included in $\mathcal{J}$.
  Since the number of good indices is at least $\frac{\gamma}{2}n$, the polynomial $f_s$ will appear in $\mathcal{L}$.

  We employ Sudan's list decoding in Step~\ref{step:binary3}.
%  The algorithm, on input a set $\mathcal{J} \subseteq \F_p \times \F_p$,
%  outputs a list $\mathcal{L}$ of all polynomials $f$ of degree less than $r n$ such that
%  $(\alpha, f(\alpha)) \in \mathcal{J}$ for more than $\sqrt{2(r n)|\mathcal{J}|}$ values of $\alpha \in \F_p$.
%    It runs in time polynomial in  $|\mathcal{J}|$ and $\log p$, and the size of $\mathcal{L}$ is at most
%    $\sqrt{2|\mathcal{J}|/(r n)}$.
    %For each $v[j,j']$, the number of codewords in $C_\inin$ that can be obtained from $v[j,j']$ by at most $\tau_\I' m$ insertions is at most $\ell'$.
%    As in the proof of Theorem~\ref{thm:explicit}, we employ Sudan's list decoding in Step~\ref{step:3}.
    It follows from a similar argument as the proof of Lemma~\ref{lem:explicit} that
    the size of $\mathcal{J}$ is bounded by
  \begin{align*}
    | \mathcal{J} | & \leq \left\lfloor \frac{(1+\tau_\I)nm}{b} \right\rfloor \cdot k \cdot \ell' \leq n k^2\ell'
  \end{align*}
  and the agreement parameter in Step~\ref{step:binary3} satisfies
  \begin{align*}
    \frac{\gamma}{2} n = \sqrt{2 \cdot \frac{\gamma^2}{8k^2\ell'} n \cdot nk^2 \ell'}
    \geq \sqrt{2(r n)|\mathcal{J}|}. 
  \end{align*}
  Thus, the algorithm can correctly output the list $\mathcal{L}$ in Step~\ref{step:3}.
  The list size is at most $\sqrt{2|\mathcal{J}|/(rn)} \leq 4k^2\ell'/\gamma = \ell$.

  We evaluate the decoding complexity.
  Note that each $v[j,j']$ in Step~\ref{step:binary2} is of length at most $bk = O(m)$,
  and that $|\mathcal{J}| = O(n)$ and $\ell = O(1)$.
  It follows from the same argument as in the proof of Lemma~\ref{lem:explicit} that, by setting $p = n$,
  the decoding complexity is $O(n^3m^2 \cdot T(m))$.
\end{proof}

By employing the code of~\cite{BGH17} as the inner code of the concatenated code in Lemma~\ref{lem:insertcode},
we have the following theorem.

\begin{theorem}\label{thm:binary}
  For any integer $q \geq 2$ and $\gamma_1 \in (0,1]$, there are infinitely many and sufficiently large $N$
for which there exists a code $C \subseteq [q]^N$ of rate $\Omega(1)$ % and $q = O(1)$
  that is $( \tau_\I N, 0, O(1))$-list decodable for $\tau_\I = (q+\sqrt{q})/2 - 1 - \gamma_1$.
  The code can be decoded in time $O(N^3(\log_2\log_2 N)^2)$.
\end{theorem}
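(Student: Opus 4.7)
The plan is to apply Lemma~\ref{lem:insertcode} with the $q$-ary code of Bukh et al.~\cite{BGH17} serving as the inner code $C_\inin$. For each integer $q \geq 2$, that construction yields, for infinitely many lengths $m$, a $q$-ary code of rate $\Omega(1)$ whose normalized minimum Levenshtein distance $\delta$ approaches $1 - 2/(q+\sqrt{q})$ from below. The insertion-only specialization $\tau_\I(\delta,0) = \delta/(1-\delta)$ derived in Section~\ref{sec:discussion} then shows that such a code is $(\tau_\I''m,0,O(1))$-list decodable for any constant $\tau_\I'' < (q+\sqrt{q})/2 - 1$, which is exactly the inner-code hypothesis that Lemma~\ref{lem:insertcode} requires.

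Given the target $\tau_\I = (q+\sqrt{q})/2 - 1 - \gamma_1$, I fix some $\tau_\I''$ strictly between $\tau_\I$ and $(q+\sqrt{q})/2-1$ and then select an integer $k \geq 2$ large together with $\gamma \in (0,1)$ small so that the parameter $\tau_\I' = (1+\gamma)\bigl(\tau_\I + \lceil (1+\tau_\I)/k\rceil\bigr)$ in Lemma~\ref{lem:insertcode} satisfies $\tau_\I' \leq \tau_\I''$; such a choice exists because $\tau_\I' \to \tau_\I$ as $\gamma \to 0$ and $k \to \infty$. Feeding the Bukh et al. code of a suitable length $m$ into Lemma~\ref{lem:insertcode} then produces a concatenated code $C \subseteq [q]^N$ of length $N = nm$, rate $(rr')/2 = \Omega(1)$ and list size $4k^2\ell'/\gamma = O(1)$, with all constants depending only on $q$ and $\gamma_1$. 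Since \cite{BGH17} works for infinitely many $m$, we obtain an infinite family of admissible $N$.

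For the claimed decoding time $O(N^3 (\log_2\log_2 N)^2)$, a direct single-level application of Lemma~\ref{lem:insertcode} forces $m = \Theta(\log_2 N)$ via the rate relation $m = 2\log_2 p/(r'\log_2 q)$ with $p \geq n$, and its default cost $O(n^3 m^2 T(m))$ only reaches $O(N^3 \poly(\log_2 N))$. To tighten the polylog factor down to $(\log_2\log_2 N)^2$, I would employ a two-level concatenation: an innermost Bukh et al. code of length $m_1 = \Theta(\log_2\log_2 N)$ decoded by brute force in time $\poly(\log_2 N)$; a middle layer obtained via Lemma~\ref{lem:insertcode} using this innermost code, giving a $[q]$-ary code of length $m_2 = \Theta(\log_2 N)$ with constant rate, constant list size and a $\poly(\log_2 N)$-time list decoder; and an outer Reed-Solomon layer of length $n = \Theta(N/\log_2 N)$ combined with the middle code through another application of Lemma~\ref{lem:insertcode}.

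The main obstacle is that Lemma~\ref{lem:insertcode} as stated performs inner decoding by brute-force enumeration of all $p^2$ message pairs in Step~\ref{step:binary2}, which precludes benefiting from an efficient inner decoder supplied by the middle layer. I would first restate and re-prove Lemma~\ref{lem:insertcode} in a form that accepts an external inner list decoder of running time $T_d(m)$, so that the per-window cost shrinks from $p^2 T(m)$ to $T_d(m)$. With this strengthened form at the outer level, careful accounting of the $m_1^2$ factor inherited from the innermost brute-force step (which produces the $(\log_2\log_2 N)^2$) together with the remaining polylog contributions absorbed against the outer $n^3 = \Theta(N^3/\log_2^3 N)$ factor should deliver the claimed $O(N^3(\log_2\log_2 N)^2)$ bound. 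Verifying that the list-decoding radii and list sizes compose correctly across the two levels, and that the middle-layer alphabet remains $[q]$, is routine but requires vigilance.
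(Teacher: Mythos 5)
Your first two paragraphs follow the paper's proof exactly: take the Bukh et al.\ code as $C_\inin$, use $\tau_\I(\delta,0)=\delta/(1-\delta)$ to see it is $(\tau_\I''m,0,O(1))$-list decodable for any $\tau_\I''<(q+\sqrt{q})/2-1$, and tune $\gamma$ and $k$ in Lemma~\ref{lem:insertcode} so that $\tau_\I'$ lands below $\tau_\I''$. That part is fine.

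The running-time discussion, however, goes wrong, and the detour it motivates is both unnecessary and incomplete. You assert that a single-level application of Lemma~\ref{lem:insertcode} ``only reaches $O(N^3\poly(\log_2 N))$,'' but this overlooks the key property of the inner code that the paper relies on: Theorem~19 of~\cite{BGH17} gives a code that is \emph{encodable in time $T(m)=O(m\log_2^2 m)$}, i.e., near-linearly, not by brute force and not in time $q^{O(m)}$ (that exponential bound is only needed for the code of Lemma~\ref{lem:highdistcode} used in Theorem~\ref{thm:constalphabetcode}). The brute-force enumeration over all $p^2$ pairs $(\alpha,\beta)$ in Step~\ref{step:binary2} is already accounted for inside the stated cost $O(n^3m^2\cdot T(m))$ of Lemma~\ref{lem:insertcode} (it contributes the factor $p^2=n^2$, times $O(n)$ windows). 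With $N=nm$ and $m=\Theta(\log_2 n)$ one gets $n^3m^2\cdot T(m)=n^3m^3\log_2^2 m=N^3(\log_2\log_2 N)^2$ directly, which is exactly the claimed bound. Your proposed two-level concatenation with a restated, ``strengthened'' version of Lemma~\ref{lem:insertcode} accepting an external inner decoder is therefore not needed; moreover, as written it is only a sketch --- the modified lemma is never proved, and the composition of radii, list sizes, and alphabets across the two levels is asserted rather than verified --- so the proposal does not actually establish the $O(N^3(\log_2\log_2 N)^2)$ decoding time. The fix is simply to invoke the $O(m\log_2^2 m)$ encoding time from~\cite{BGH17} and substitute it into the cost bound of Lemma~\ref{lem:insertcode}.
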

\begin{proof}
  We use the following code as the inner code $C_\inin$.
  \begin{lemma}[Theorem~19 of~\cite{BGH17}]
    Fix an integer $q \geq 2$ and real $\gamma_2 > 0$.
    There are infinitely many and sufficiently large $m$ for which
    there is a code $C_\inin \subseteq [q]^m$ of rate $(\gamma_2/q)^{O(\gamma^{-3})}$ such that
    the normalized minimum Levenshtein distance is at least $1 - 2/(q+\sqrt{q}) - \gamma_2$,
    and the encoding can be done in time $O(m \log_2^2 m)$.
  \end{lemma}
  Let $\delta = 1 - 2/(q+\sqrt{q}) - \gamma_2$ be the normalized distance of the code in the above lemma.
  It follows from the discussion in Section~\ref{sec:discussion} that the code is list-decodable against $\delta/(1-\delta)$-fraction of insertions.
  Then, the list-decoding radius is
  \[ \frac{\delta}{1-\delta} = \frac{q+\sqrt{q}}{2} - 1 - \frac{(q+\sqrt{q})^2}{4/\gamma_2 + 2(q+\sqrt{q})},\]
  which can be arbitrarily close to $(q+\sqrt{q})/2 - 1$ for sufficiently small $\gamma_2$.
  For any given $\gamma_1 > 0$, by choosing sufficiently small $\gamma$ and large $k$ in Lemma~\ref{lem:insertcode},
  we can construct a $(\tau_\I N, 0, \ell)$-list decodable code for
  $\tau_\I = (q+\sqrt{q})/2 - 1 - \gamma_1$ and $\ell = 4k^2\ell'/\gamma = O(1)$, where $N = nm$.

  The rate of the code $C_\conc$ is $\Omega(1)$ since both $r = \gamma^2/(8k^2\ell')$ and $r' =(\gamma_2/q)^{O(\gamma^{-3})}$ are constant.
  Since $m = O(\log_2 n)$, the list-decoding of $C_\conc$ can be done in time $O(n^3m^2 \cdot m\log_2^2m) = O(N^3(\log_2\log_2 N)^2)$.
  \ignore{
  Specifically, we choose $\gamma$ in the above and $\gamma_\I$ in
  Theorem~\ref{thm:explicit} such that $\gamma < (4\gamma' - 2(q+\sqrt{q})\gamma_\I)/(q+\sqrt{q})^2$ and $\gamma_\I < 2\gamma'/(q+\sqrt{q})$.
  %$\gamma < (q+\sqrt{q})(q+\sqrt{q}-2\gamma')/(4\gamma')$.
  Then, the code $C_\inin$ can achieve the list-decoding radius
  $\tau < \tau_\mathsf{L}(1-2/(q+\sqrt{q})-\gamma,0)$, for which we have that
  %It follows from (\ref{eq:normalbound}) that
  \begin{align*}
    \tau_\mathsf{L}\left( 1 - \frac{2}{q+\sqrt{q}} - \gamma, 0\right) %& = \frac{1 - \frac{2}{q+\sqrt{q}} - \gamma}{\frac{2}{q+\sqrt{q}}+\gamma}\\
    & = \frac{q+\sqrt{q}}{2}-1 - \frac{\frac{q+\sqrt{q}}{2} \gamma}{\frac{2}{q+\sqrt{q}}+\gamma} \\
    & >  \frac{q+\sqrt{q}}{2}-1 - \frac{(q+\sqrt{q})^2\gamma}{4}\\
    & > \frac{q+\sqrt{q}}{2}-1 - \left(\gamma'- \frac{q+\sqrt{q}}{2}\gamma_\I \right) \\
    %& \geq \frac{q+\sqrt{q}}{2}-1 - \frac{\gamma'- (\frac{q+\sqrt{q}}{2}-1)\gamma_\I }{1-\gamma_\I}\\
    %& =  \frac{q+\sqrt{q}}{2}-1 + \frac{(\frac{q+\sqrt{q}}{2}-1)\gamma_\I -\gamma'}{1-\gamma_\I}\\
    %& >  \frac{q+\sqrt{q}}{2}-1 - \gamma'.\\
    & > \frac{ \frac{q+\sqrt{q}}{2} - 1 - \gamma'}{1 - \gamma_\I} = \frac{\tau_\I'}{1- \gamma_\I}.
  \end{align*}
  %where the second inequality follows from the assumption,
  %first inequality follows from the fact that $2+(q+\sqrt{q})\gamma \geq 2$,
  Let $\tau_\I = \tau_\I'/(1-\gamma_\I)$ and $\tau_\D = 0$.
  Since the gap between $\tau_\mathsf{L}$ and $\tau_\I+\tau_\D$ is $O(1)$, $C_\inin$ is $(\tau_\I,0,O(1))$-list decodable.
  
  Thus, by choosing $\gamma_\I = 2(2 r \ell(\tau_\I^{-1}+1))^{1/4}$ for sufficiently small $r$ in Theorem~\ref{thm:explicit},
  we obtain the code $C_\conc$ that is $(\tau_\I'N, 0, O(1))$-list decodable, where $N = nm$.
  
  As in Corollary~\ref{cor:nonbinary}, by choosing $p = n$ and $m = O(\log_2 n)$, we have $r' = \Omega(1)$ in Theorem~\ref{thm:explicit}.
  Then, the rate of $C_\conc$ is $\Omega(1)$.
  Since $C_\inin$ is encodable in time $O(m \log^2_2 m)$,
  the decoding  can be done in time  $O(n^3m^2\cdot m \log^2_2 m)=O(N^3(\log_2\log_2 N)^2)$.
  %As discussed in Section~\ref{sec:discussion},
  }
 \end{proof}

%By employing the code of Lemma~\ref{lem:highdistcode}, which takes time $T(m) = q^{O(m)}$ for encoding,
%we can construct a code that can be decoded in time $O((nm)^3 \cdot q^{O(m)}) = O\left((nm)^{O(1/\varepsilon)}\right)$.

By Theorem~\ref{thm:binary}, we obtain a binary code of rate $\Omega(1)$ that is list-decodable from $(2+\sqrt{2})/2-1 \approx 0.707$-fraction of insertions
in polynomial time. % $O(N^3 \log\log N)$.

\section{Upper Bound on Code Size}\label{sec:plotkin}

In this section, we present a Plotkin-type bound on code size in the Levenshtein metric.
The Plotkin bound~\cite{Plo60} claims that if the relative minimum distance of a given code is large,
the code must be small.
More specifically, for any code $C \subseteq [q]^n$ with minimum Hamming distance $d$,
%where $|\Sigma| = q$,
\begin{itemize}
\item if $\frac{d}{n} \geq 1 - \frac{1}{q}$, $|C| \leq 2qn$;
\item if $\frac{d}{n} > 1 - \frac{1}{q}$, $|C| \leq \frac{qd}{qd-(q-1)n}$.
\end{itemize}

We provide a similar bound in the Levenshtein metric based on Theorem~\ref{thm:bound}
by assuming the existence of the common supersequence of all codewords in the code.
Let $C \subseteq [q]^n$ be a code of minimum Levenshtein distance $d$,
and $v \in [q]^N$ be a word that contains every codeword in $C$ as a subsequence.
Since the word $v' = 12 \cdots q 1 2 \cdots q \cdots 1 2 \cdots q \in [q]^{qn}$
satisfies the condition of $v$, we can assume that $N \leq qn$.
%The main observation is that %for any code $C \subseteq [q]^n$ with minimum Levenshtein distance $d$,
%\begin{equation*}
%r = 12 \cdots q 1 2 \cdots q \cdots 1 2 \cdots q \in [q]^{qn}
%\end{equation*}
%of length $qn$ contains every word in $[q]^n$ as a subsequence.
Then, we use Theorem~\ref{thm:bound} in which we employ the above $v$ as a received word,
and set $t_\I = N-n$, $t_\D = 0$.
By the property of $v$, we have that
\begin{equation*}
  \ell \triangleq \max_{r \in \Sigma^{N}}|\lball(v, 0, N-n) \cap C| = |C|.
\end{equation*}
Thus, an upper bound on the code size is obtained by Lemma~\ref{lem:bound}.

\begin{theorem}\label{thm:plotkin}
  Let $C \subseteq \Sigma^n$ be a code of minimum Levenshtein distance $d$, where $|\Sigma| = q$.
  Suppose that there exists $v \in \Sigma^N$ that contains every codeword in $C$ as a subsequence.
  If $\frac{d}{2n} > 1  - \frac{n}{N}$, then
  \begin{equation*}
    |C| \leq \frac{Nd}{Nd - 2(N-n)n}.
    %\frac{qd}{qd - 2(q-1)n}.
  \end{equation*}
\end{theorem}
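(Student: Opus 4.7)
The proof plan is to apply Lemma~\ref{lem:bound} directly, using the hypothesized common supersequence $v$ as the received word. The key observation is that if $v \in \Sigma^N$ contains every codeword $c \in C \subseteq \Sigma^n$ as a subsequence, then $c$ can be obtained from $v$ by exactly $N-n$ deletions; equivalently, $v$ can be obtained from $c$ by exactly $N-n$ insertions and $0$ deletions. Hence every codeword lies in $\lball(v, 0, N-n)$, so that
\[
|C| = |\lball(v, 0, N-n) \cap C|.
\]
This allows us to identify $\ell$ in Lemma~\ref{lem:bound} with $|C|$ itself.

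Next, I would set $t_\I = N-n$ and $t_\D = 0$ in Lemma~\ref{lem:bound} and check the hypothesis~(\ref{eq:listcond}). Substituting these values reduces the condition $d/2 > t_\D + t_\I(n - t_\D)/N$ to $d/2 > (N-n)n/N$, which after dividing by $n$ is exactly the assumed inequality $d/(2n) > 1 - n/N$. Thus the lemma applies without further work.

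Finally, I would plug $t_\I = N-n$, $t_\D = 0$ into the list-size bound~(\ref{eq:listupbound}):
\[
|C| \;\leq\; \frac{N(d/2)}{N(d/2) - (N-n)\,n} \;=\; \frac{Nd}{Nd - 2(N-n)n},
\]
which is the claimed Plotkin-type bound.

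There is no real obstacle here; the statement is essentially a repackaging of Lemma~\ref{lem:bound} in the special regime $t_\D = 0$, $t_\I = N-n$, exploiting the fact that a single common supersequence $v$ certifies membership of the \emph{entire} code in one Levenshtein ball around $v$. The only thing that requires a brief justification is the observation that ``$v$ is a supersequence of $c$'' is equivalent to ``$v \in \lball(c, N-n, 0)$,'' but this is immediate from the definition of subsequence.
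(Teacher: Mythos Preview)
Your proposal is correct and follows exactly the same approach as the paper: apply Lemma~\ref{lem:bound} with $t_\I = N-n$, $t_\D = 0$, and the common supersequence $v$ as the received word, noting that $C \subseteq \lball(v,0,N-n)$ so that $\ell = |C|$. The paper's proof is essentially the one-sentence version of what you wrote, with the verification of condition~(\ref{eq:listcond}) and the algebraic simplification of~(\ref{eq:listupbound}) left implicit.
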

\begin{proof}
  Based on the above observation that $\ell = |C|$ for the word $v$,
  the statement follows simply by setting $t_\I = N-n$ and $t_\D = 0$ in Lemma~\ref{lem:bound}.
\end{proof}

When the word $v' = 12 \cdots q 1 2 \cdots q \cdots 1 2 \cdots q \in [q]^{qn}$ is employed in Theorem~\ref{thm:plotkin},
we have the statement that if $C \subseteq \Sigma^n$ %with minimum Levenshtein distance $d$
satisfies $\frac{d}{2n} > 1 - \frac{1}{q}$,
then $|C| \leq \frac{qd}{qd - 2(q-1)n} =\frac{d/2n}{(d/2n) -(1 - 1/q)}$.
This result can be also obtained by the Plotkin bound in the Hamming metric
and the fact that if the minimum Levenshtein distance of $C$ is $d$, then the minimum Hamming distance of $C$ is at least $d/2$.
Thus, the theorem is effective if we could find $v$ of length less than $qn$.

Bukh and Ma~\cite{BM14} gave upper bounds on the size of codes. % when $n$ is sufficiently large.
They first observed that if $\frac{d}{2n} > 1-\frac{1}{q}$, then $|C| \leq q$.
As the main result, they showed that for any non-negative integers $q, r, n$ with $q \geq 2$ and $n \geq q(10r)^{9r}$,
if $\frac{d}{2n} > 1 - \frac{1}{q} - c\cdot n^{-1/r}$, then $|C| \leq r + q+ 1$, where $c = \Theta(r^{-9}q^{1/r-2})$.
The first result gives a tighter bound than that obtained from Theorem~\ref{thm:plotkin} by using $v'$.
%Compared with Theorem~\ref{thm:plotkin},
The main result of~\cite{BM14} is a tight bound when $n$ is sufficiently large. 
%, compared with Theorem~\ref{thm:plotkin}, give a bound
%when $N \geq n\left(\frac{1}{q}+c \cdot n^{-1/r}\right)$ and $n \geq q(10r)^{9r}$.
%The main result of~\cite{BM14} subsumes Theorem~\ref{thm:plotkin} for the case that
%$N \geq n\left(\frac{1}{q}+c \cdot n^{-1/r}\right)$ and $n \geq q(10r)^{9r}$.
Compared with the results of~\cite{BM14},
Theorem~\ref{thm:plotkin} provides a bound for $n < q(10r)^{9r}$ or $N < qn/(1+c \cdot qn^{-1/r})$,
although the existence of a supersequence of length $N$ is assumed.

%They showed that any $C \subseteq [q]^n$ with $|C| \geq q+2$,
%the minimum Levenshtein distance of $C$ is at most $2n\left( 1 - \frac{1}{q} - c(q,|C|) \cdot n^{-1/(|C|-q-2)} \right)$ for $n \geq q(10(|C|-q-2))^{9(|C|-q-2)}$.
%By considering this  as an upper bound on code size,
%it says that $|C| \leq q + 2 + {\log_n^{-1}\left( \left(1-\frac{1}{q}-\frac{d}{2n}\right)\cdot \frac{1}{c(q,|C|)} \right)}$ for sufficiently large $n$.
%The bound implicitly assumes that $\frac{d}{2n} < 1 - \frac{1}{q}$.
%Compared with this bound, Theorem~\ref{thm:plotkin} provides a conditional bound for any $n$ 
%and an unconditional bound for $\frac{d}{2n} > 1 - \frac{1}{q}$.
%when the condition should be satisfied.

%This bound implicitly assumes that $\frac{d}{2n} < 1 - \frac{1}{q}$.
%Since Theorem~\ref{thm:plotkin} requires the condition that $\frac{d}{2n} > 1 - \frac{n}{N}$, where the right-hand side is at most $1- \frac{1}{q}$,
%these two bound complement each other.

%Since Theorem~\ref{thm:plotkin} requires the condition that $\frac{d}{2n} > 1 - \frac{n}{N} \geq 1 - \frac{1}{q}$, these two bounds complement each other.

% if have a single appendix:
%\appendix[Proof of the Zonklar Equations]
% or
%\appendix  % for no appendix heading
% do not use \section anymore after \appendix, only \section*
% is possibly needed

% use appendices with more than one appendix
% then use \section to start each appendix
% you must declare a \section before using any
% \subsection or using \label (\appendices by itself
% starts a section numbered zero.)
%

%\appendices
\appendix

\section{On the arguments of~\cite{WZ18}}\label{sec:wz17}

First, recall the arguments in the proof of a Johnson-type bound of~\cite{WZ18}.
Let $C \subseteq \Sigma^n$ be a code of minimum Levenshtein distance $d$.
For a given word $r \in \Sigma^N$, let $\lball(r, t)$ be the set of words that are within Levenshtein distance $t$ from $r$.
As in the proof of Lemma~\ref{lem:bound},
consider the set $\mathcal{L}' \triangleq \lball(r, t) \cap C = \{ c_1, c_2, \dots, c_\ell\}$.
For each $i \in [\ell]$, define $\mathcal{D'}^{(i)}$ and $\mathcal{E'}^{(i)}$ to be sets of positions of deletions
from $c_i$ and $r$, respectively, of the smallest size to obtain the same word.
By definition, $\mathcal{D'}^{(i)} \subseteq [n]$ and $\mathcal{E'}^{(i)} \subseteq [N]$.
Note that the Levenshtein distance between $c_i$ and $r$
is calculated as $\ldis(c_i, r) = |\mathcal{D'}^{(i)}| + |\mathcal{E'}^{(i)}|$.
In the proof, the relation that
\begin{align*}
  & \sum_{i \in [\ell]} \! \sum_{j \in [\ell]\setminus\{i \}}\!\! \ldis(c_i,c_j) \\
  & \!   \leq \!  \sum_{i \in [\ell]} \! \sum_{j \in [\ell]\setminus\{i \}} \!
  \left| \mathcal{D'}^{(i)} \! \cup \! \mathcal{D'}^{(j)} \! \setminus \!
  \{ k \! : \! k \in \mathcal{D'}^{(i)} \! \cap \! \mathcal{D'}^{(j)} \! \wedge  \!a^{i}_k \! = \! a^{j}_k\} \right| %\nonumber\\
  %& \quad
  + \left| (\mathcal{E'}^{(i)} \cup \mathcal{E'}^{(j)}) \setminus (\mathcal{E'}^{(i)} \cap \mathcal{E'}^{(j)}) \right|
\end{align*}
is used in the analysis.
The above inequality seems to be given by assuming the relation that
\begin{align}
  \ldis(c_i,c_j) \leq \left| \mathcal{D'}^{(i)} \! \cup \! \mathcal{D'}^{(j)} \! \setminus \!
\{ k \! : \! k \in \mathcal{D'}^{(i)} \! \cap \! \mathcal{D'}^{(j)} \! \wedge  \!a^{i}_k \! = \! a^{j}_k\} \right|
+ \left| (\mathcal{E'}^{(i)} \cup \mathcal{E'}^{(j)}) \setminus (\mathcal{E'}^{(i)} \cap \mathcal{E'}^{(j)}) \right| \triangleq u(i,j).\label{eq:wzrelation}
\end{align}
However, the relation $\ldis(c_i,c_j) \leq u(i,j)$  does not hold in general.
We present several counterexamples to (\ref{eq:wzrelation}).

\ignore{
Define the value $\lambda'$ as follows:
\begin{align*}
  \lambda'
  & \!   \triangleq \!  \sum_{i \in [\ell]} \! \sum_{j \in [\ell]\setminus\{i \}} \!
  \left| \mathcal{D'}^{(i)} \! \cup \! \mathcal{D'}^{(j)} \! \setminus \!
  \{ k \! : \! k \in \mathcal{D'}^{(i)} \! \cap \! \mathcal{D'}^{(j)} \! \wedge  \!a^{i}_k \! = \! a^{j}_k\} \right| %\nonumber\\
  %& \quad
  + \left| (\mathcal{E'}^{(i)} \cup \mathcal{E'}^{(j)}) \setminus (\mathcal{E'}^{(i)} \cap \mathcal{E'}^{(j)}) \right|,
\end{align*}
where $a_k^i$ and $a_k^j$ are the $k$-th symbols of $c_i$ and $c_j$, respectively. 
Then, the following two relations are used in the proof:
\begin{gather}
  \lambda' \geq \ell(\ell - 1)d, \nonumber\\
  \lambda' = \sum_{a \in \Sigma \cup \otimes} \biggl( \sum_{k \in [n]}x_{a,k}(\ell - x_{a,k}) + \sum_{k' \in [N]}y_{a,k'}(\ell - y_{a,k'})\biggr), \label{eq:lambdaeq}
\end{gather}
where $x_{a,k}$ for $a \in \Sigma$ is the number of times in all $\ell$ codewords in $\mathcal{L}'$ such that
$k \in \mathcal{D'}^{(i)}$ and the $k$-th symbol of $c_i$ is $a$, $x_{\otimes,  k} \triangleq \ell - \sum_{a \in \Sigma}x_{a,k}$,
$y_{a,k'}$ for $a \in \Sigma$ is the number of times in all $\ell$ codewords in $\mathcal{L}'$ such that
$k' \in \mathcal{E'}^{(i)}$ and the $k'$-th symbol of $r$ is $a$, and  $y_{\otimes, k'} \triangleq \ell - \sum_{a \in \Sigma}y_{a,k'}$.
Specifically, the inequality
\begin{align}
  %&
  \ell(\ell - 1)d %\nonumber \\
  & \leq \sum_{a \in \Sigma \cup \otimes} \biggl( \sum_{k \in [n]}x_{a,k}(\ell - x_{a,k}) + \sum_{k' \in [N]}y_{a,k'}(\ell - y_{a,k'})\biggr)\label{eq:lambdaineq}
\end{align}
is used in the analysis.
}

%We show that (\ref{eq:lambdaineq}) does not hold in general.
Consider the codewords $c_1 = 000000, c_2 = 011100, c_3=100011$ and the received word $r_1 = 01100$.
We can choose such that
%$\mathcal{L}' = C$,
%the received word $r = 01100\in \bin^5$, and %$\mathcal{D'}^{(i)}$ and $\mathcal{E'}^{(i)}$ for $i \in \{1,2,3\}$
%such that
%$c_1 = 000000, c_2 = 011100, c_3 = 100011, r = 01100,
$\mathcal{D'}^{(1)}=\mathcal{D'}^{(3)} = \{4,5,6\}, \, \mathcal{D'}^{(2)}=\{4\}, \,
\mathcal{E'}^{(1)}=\{2,3\}, \, \mathcal{E'}^{(2)}=\emptyset, \, \mathcal{E'}^{(3)}=\{1,2\}$.
Fig.~\ref{fig:example} displays the codewords in $C$ together with $r$
in which $\mathcal{D'}^{(i)}$ and $\mathcal{E'}^{(j)}$ are shown by boxes.

\begin{figure}[h]
\begin{align*}
  c_1 & = 0 \ \ 0 \ \ 0 \ \boxed{0 \ \ 0 \ \ 0} & r_1 = \ 0 \ \boxed{1 \ \ 1} \ 0 \ \ 0\\
  c_2 & = 0 \ \ 1 \ \ 1 \ \boxed{ 1 } \ 0 \ \ 0 & r_1 = \ 0 \ \ 1 \ \ 1 \ \ 0 \ \ 0\\
  c_3 & = 1 \ \ 0 \ \ 0 \ \boxed{0 \ \ 1 \ \ 1} & r_1 = \boxed{0 \ \ 1} \ 1 \ \ 0 \ \ 0
\end{align*}
\caption{Codewords $c_1, c_2, c_3$ and the received word $r_1$, where the boxes represent the sets $\mathcal{D'}^{(i)}$ and $\mathcal{E'}^{(j)}$.}
\label{fig:example}
\end{figure}
We can see that $u(1,2) = 5, u(1,3) = 4, u(2,3)= 5$,
whereas $\ldis(c_1,c_2) = \ldis(c_1,c_3) = \ldis(c_2,c_3) = 6$.

Another simpler example is that $c_4 = 000111222, c_5 = 222111000, r_2 = 000$.
Then, $\mathcal{D'}^{(4)} = \{4,5,6,7,8,9\}, \mathcal{D'}^{(5)}=\{1,2,3,4,5,6\}$, and $\mathcal{E'}^{(4)}=\mathcal{E'}^{(5)}=\emptyset$.
It holds that $\ldis(c_4,c_5) =12$ and $u(4,5) =6$.
\begin{figure}[h]
\begin{align*}
  c_4 & = \ 0 \ \ 0 \ \ 0  \ \boxed{ 1 \ \ 1 \ \ 1 \ \ 2 \ \ \ 2 \ \ 2 } & r_2 = \ 0  \ \ 0 \ \ 0\\
  c_5 & = \boxed{ 2 \ \ 2 \ \ 2 \ \ 1 \ \ 1 \ \ 1 } \  0 \ \ 0 \ \ 0  & r_2 = \  0  \ \ 0 \ \ 0\\
\end{align*}
\caption{Codewords $c_4, c_5$ and the received word $r_2$.}
\label{fig:exampleq}
\end{figure}

As seen in the above, (\ref{eq:wzrelation}) does not hold in general. %proof of~\cite{WZ18} is flawed.

%However, the bound of~\cite{WZ18} could be seen as a Johnson-type bound for a restricted class of codes.
%Namely, the proof goes through by assuming that every two codewords satisfy the inequality of (\ref{eq:wzrelation}) for every possible received word.

%The bound of~\cite{WZ18} can be seen as a Johnson-type bound for a restricted class of codes
%in which every two codewords satisfy the inequality of (\ref{eq:wzrelation}) for every possible received word.

\ignore{
The minimum Levenshtein distance of $C$ is $d = 6$.
Since $\mathcal{L}' = C$, $\ell = 3$.
Note that for each $i \in [3]$,
$\mathcal{D'}^{(i)}$ and $\mathcal{E'}^{(i)}$ are sets of positions of deletions from $c_i$ and $r$
such that $\mathcal{D'}^{(i)}$ and $\mathcal{E'}^{(i)}$ are smallest size to obtain the same word.
%We set $t = 5$ since the Levenshtein distance between $c_i$ and $r$ is at most $5$.

It follows from (\ref{eq:lambdaeq}) that
\begin{align*}
\lambda'  & = 
\sum_{k \in [6]}x_{0,k}(\ell - x_{0,k}) + \sum_{k' \in [5]}y_{0,k'}(\ell-y_{0,k'}) %\\
%&
%\quad
+ \sum_{k \in [6]}x_{1,k}(\ell - x_{1,k}) + \sum_{k' \in [5]}y_{1,k'}(\ell-y_{1,k'})\\
& \quad + \sum_{k \in [6]}x_{\otimes,k}(\ell - x_{\otimes,k}) + \sum_{k' \in [5]}y_{\otimes,k'}(\ell-y_{\otimes,k'})\\
& = (0 \cdot 3 + 0 \cdot 3 + 0 \cdot 3 + 2 \cdot 1 + 1 \cdot 2 + 1 \cdot 2) %\\
%& \quad
+ (1 \cdot 2 + 0 \cdot 3 + 0 \cdot 3 + 0 \cdot 3 + 0 \cdot 3)\\
& \quad + (0 \cdot 3 + 0 \cdot 3 + 0 \cdot 3 + 1 \cdot 2 + 1 \cdot 2 + 1 \cdot 2) %\\
%& \quad
+ (0 \cdot 3 + 2 \cdot 1 + 1 \cdot 2 + 0 \cdot 3 + 0 \cdot 3)\\
& \quad + (3 \cdot 0 + 3 \cdot 0 + 3 \cdot 0 + 0 \cdot 3 + 1 \cdot 2 + 1 \cdot 2) %\\
%& \quad
+ (2 \cdot 1 + 1 \cdot 2 + 2 \cdot 1 + 3 \cdot 0 + 3 \cdot 0)\\
& = 6 + 2 + 6 + 4 + 4 + 6 \\
& = 28.
\end{align*}
However, $\ell(\ell - 1) d = 36$, which contradicts (\ref{eq:lambdaineq}).
%Thus, (\ref{eq:lambdaineq}) does not hold.
}

% use section* for acknowledgment

\section*{Acknowledgment}
This work was supported in part by JSPS Grants-in-Aid for Scientific Research Numbers 16H01705, 17H01695, and 18K11159.

% Can use something like this to put references on a page
% by themselves when using endfloat and the captionsoff option.
%\ifCLASSOPTIONcaptionsoff
%  \newpage
%\fi

% trigger a \newpage just before the given reference
% number - used to balance the columns on the last page
% adjust value as needed - may need to be readjusted if
% the document is modified later
%\IEEEtriggeratref{8}
% The "triggered" command can be changed if desired:
%\IEEEtriggercmd{\enlargethispage{-5in}}

% references section

%\bibliographystyle{IEEEtran}
%\bibliography{IEEEabrv,mybib.bib}

\bibliographystyle{abbrv}
\bibliography{mybib}

\end{document}